\title{Parameterized Algorithms for Matrix Completion With Radius Constraints} 
\author{Tomohiro Koana}{Technische Universit\"at Berlin, Faculty~IV, Algorithmics and Computational Complexity}{koana@campus.tu-belin.de}{https://orcid.org/}{Partially supported by the DFG project MATE (NI 369/17).}
\author{Vincent Froese}{Technische Universit\"at Berlin, Faculty~IV, Algorithmics and Computational Complexity}{vincent.froese@tu-berlin.de}{https://orcid.org/}{}
\author{Rolf Niedermeier}{Technische Universit\"at Berlin, Faculty~IV, Algorithmics and Computational Complexity}{rolf.niedermeier@tu-berlin.de}{https://orcid.org/}{}
\authorrunning{T.~Koana, V.~Froese, and R.~Niedermeier} 
\keywords{fixed-parameter tractability, consensus string problems, Closest String, Closest String with Wildcards}
\newcolumntype{m}{>{\hsize=.5\hsize}X}
\algnewcommand\algorithmicinput{\textbf{Input:}}
\algnewcommand\algorithmicoutput{\textbf{Task:}}
\algnewcommand\Input{\item[\algorithmicinput]}
\algnewcommand\Output{\item[\algorithmicoutput]}
\newcommand{\NN}{\mathbb{N}} 
\newcommand{\RR}{\mathbb{R}} 
\newcommand{\problemdef}[4]{
\begin{center}
  \begin{minipage}{0.95\textwidth}
    \normalsize\textsc{#2} \smallskip \\
    \begin{tabularx}{\textwidth}{@{}l@{\hspace{3pt}}X}
      \normalsize\textbf{Input:} & \normalsize#3 \\
      \normalsize\textbf{#1:}    & \normalsize#4
    \end{tabularx}
  \end{minipage}
\end{center}
}
\newcommand{\dprob}[4][Question]{\problemdef{#1}{#2}{#3}{#4}}
\begin{document}

\maketitle

\begin{abstract}
Considering matrices with missing entries, we study NP-hard matrix completion 
problems where the resulting completed matrix shall have limited (local) radius.
In the pure radius version, this means that the goal is to fill in the entries
such that there exists a `center string' which 
has Hamming distance to all matrix
rows as small as possible. In stringology, this problem is also 
known as \textsc{Closest String with Wildcards}. In the local radius version,
the requested center string must be one of the rows of the completed
matrix.

Hermelin and Rozenberg [CPM~2014, TCS~2016] performed parameterized
complexity studies for \textsc{Closest String with Wildcards}.
We answer one of their open questions, fix a bug concerning a fixed-parameter 
tractability result in their work, and improve some upper running time bounds.
For the local radius case, we 
reveal a computational complexity dichotomy. In general, our results indicate that,
although being NP-hard as well, this variant often allows for faster 
(fixed-parameter) algorithms.
\end{abstract}

\section{Introduction}

In many applications data can only be partially measured, which leads to incomplete data records with missing entries. A common problem in data mining, machine learning, and computational biology 
is to infer these missing entries.
In this context, matrix completion problems
play a central role. Here the goal is to fill in the unknown entries of an incomplete data matrix such that certain measures regarding the completed matrix are optimized. Ganian et al.~\cite{GKOS18} recently studied parameterized 
algorithms for two variants of matrix completion problems; their goal was 
to either minimize the rank or to minimize the number of distinct rows in the 
completed matrix. In this work, we focus our study on another variant, 
namely, minimizing the `radius' of the completed matrix.
Indeed, this is closely related to the topic of consensus (string)
problems, which received a lot of attention in stringology and 
particularly with respect to parameterized complexity studies~\cite{BHKN14}.
Indeed, radius minimization for 
incomplete matrices is known in the stringology community under
the name
\textsc{Closest String with Wildcards}~\cite{HR15}, a generalization of 
the frequently studied \textsc{Closest String} problem.
Herein, an incomplete matrix shall be completed such that there exists a vector that is not too far from each row vector of the completed matrix in terms of the Hamming distance (that is, the completed matrix has small \emph{radius}).
We consider this radius minimization problem and also a local variant where all row vectors of the completed matrix must be close to a row vector 
in the matrix.

\begin{figure}[t]
  \centering
  \begin{tikzpicture}[
    scale=0.5,
    every node/.style={font=\footnotesize}
  ]
    \node at (0.5, 0.5) {$+$};
    \node at (0.5, 1.5) {$-$};
    \node at (0.5, 3.5) {$-$};
    \node at (1.5, 1.5) {$*$};
    \node at (0.5, 2.5) {$*$};

    \node at (1.5, 0.5) {$\beta$};
    \node at (1.5, 2.5) {$\gamma$};
    \node at (1.5, 3.5) {$\alpha$};

    \node at (2.5, 0.5) {4.2};
    \node at (2.5, 1.5) {7.3};
    \node at (2.5, 2.5) {4.2};
    \node at (2.5, 3.5) {$*$};

    \node at (3.5, 0.5) {$>$};
    \node at (3.5, 3.5) {$>$};
    \node at (3.5, 1.5) {$*$};
    \node at (3.5, 2.5) {$*$};

    \node at (4.5, 0.5) {$0$};
    \node at (4.5, 1.5) {$1$};
    \node at (4.5, 2.5) {$2$};
    \node at (4.5, 3.5) {$1$};
    \draw[help lines] (0,0) grid (5, 4);
  \begin{scope}[shift={(7, 0.75)}]
    \fill[lipicsLightGray] (0, 0) rectangle (1, 1);
    \fill[lipicsLightGray] (1, 3) rectangle (2, 4);
    \fill[lipicsLightGray] (1, 2) rectangle (2, 3);
    \fill[lipicsLightGray] (2, 1) rectangle (3, 2);
    \fill[lipicsLightGray] (4, 0) rectangle (5, 1);
    \fill[lipicsLightGray] (4, 2) rectangle (5, 3);
    \node at (0.5, 0.5) {$+$};
    \node at (0.5, 1.5) {$-$};
    \node at (0.5, 2.5) {$-$};
    \node at (0.5, 3.5) {$-$};

    \node at (1.5, 0.5) {$\beta$};
    \node at (1.5, 1.5) {$\beta$};
    \node at (1.5, 2.5) {$\gamma$};
    \node at (1.5, 3.5) {$\alpha$};

    \node at (2.5, 0.5) {4.2};
    \node at (2.5, 1.5) {7.3};
    \node at (2.5, 2.5) {4.2};
    \node at (2.5, 3.5) {4.2};

    \node at (3.5, 0.5) {$>$};
    \node at (3.5, 1.5) {$>$};
    \node at (3.5, 2.5) {$>$};
    \node at (3.5, 3.5) {$>$};

    \node at (4.5, 0.5) {$0$};
    \node at (4.5, 1.5) {$1$};
    \node at (4.5, 2.5) {$2$};
    \node at (4.5, 3.5) {$1$};
    \draw[help lines] (0,0) grid (5, 4);

    \begin{scope}[shift={(0,-1.5)}]
      \node at (0.5, 0.5) {$-$}; \node at (1.5, 0.5) {$\beta$}; \node at (2.5, 0.5) {4.2}; \node at (3.5, 0.5) {$>$}; \node at (4.5, 0.5) {$1$};
      \draw[help lines] (0,0) grid (5, 1);
    \end{scope}
  \end{scope}
  \begin{scope}[shift={(14, 0.75)}]
    \fill[lipicsLightGray] (0, 0) rectangle (1, 1);
    \fill[lipicsLightGray] (1, 0) rectangle (2, 1);
    \fill[lipicsLightGray] (1, 3) rectangle (2, 4);
    \fill[lipicsLightGray] (2, 1) rectangle (3, 2);
    \fill[lipicsLightGray] (4, 0) rectangle (5, 1);
    \fill[lipicsLightGray] (4, 1) rectangle (5, 2);
    \fill[lipicsLightGray] (4, 3) rectangle (5, 4);
    \node at (0.5, 0.5) {$+$};
    \node at (0.5, 1.5) {$-$};
    \node at (0.5, 2.5) {$-$};
    \node at (0.5, 3.5) {$-$};

    \node at (1.5, 0.5) {$\beta$};
    \node at (1.5, 1.5) {$\gamma$};
    \node at (1.5, 2.5) {$\gamma$};
    \node at (1.5, 3.5) {$\alpha$};

    \node at (2.5, 0.5) {4.2};
    \node at (2.5, 1.5) {7.3};
    \node at (2.5, 2.5) {4.2};
    \node at (2.5, 3.5) {4.2};

    \node at (3.5, 0.5) {$>$};
    \node at (3.5, 1.5) {$>$};
    \node at (3.5, 2.5) {$>$};
    \node at (3.5, 3.5) {$>$};

    \node at (4.5, 0.5) {$0$};
    \node at (4.5, 1.5) {$1$};
    \node at (4.5, 2.5) {$2$};
    \node at (4.5, 3.5) {$1$};
    \draw[help lines] (0,0) grid (5, 4);

    \begin{scope}[shift={(0,-1.5)}]
        \draw[help lines] (0,0) grid (5, 1);
      \node at (0.5, 0.5) {$-$}; \node at (1.5, 0.5) {$\gamma$}; \node at (2.5, 0.5) {4.2}; \node at (3.5, 0.5) {$>$}; \node at (4.5, 0.5) {$2$};
    \end{scope}
  \end{scope}
  \end{tikzpicture}
  \caption{An example of \textsc{MinRMC} and \textsc{MinLRMC}.
  The incomplete input matrix is depicted in the left (a $*$ denotes a missing entry).
  Optimal solutions for \textsc{MinRMC} ($d = 2$) and \textsc{MinLRMC} ($d = 3$) are shown in the middle and right.
  Entries differing from the solution vector are marked gray.
  }
  \label{fig:mcp:example}
\end{figure}

Given the close relation to \textsc{Closest String}, 
which is NP-hard already for binary strings~\cite{FL97}, all problems we 
study in this work are NP-hard in general. However, we provide 
several positive algorithmic results, namely fixed-parameter tractability with respect to natural parameters or even polynomial-time solvability for special cases.
Formally, we study the following problems (see \Cref{fig:mcp:example} for illustrative examples):

\dprob{Minimum Radius Matrix Completion (MinRMC)}
{An incomplete matrix $\mathbf{S} \in (\Sigma \cup \{ * \})^{n \times \ell}$ and $d \in \NN$.}
{Is there a `completion' $\mathbf{T} \in \Sigma^{n \times \ell}$ of $\mathbf{S}$ such that $\delta(v, \mathbf{T}) \le d$ for some vector $v \in \Sigma^\ell$?}

\dprob{Minimum Local Radius Matrix Completion (MinLRMC)}
{An incomplete matrix $\mathbf{S} \in (\Sigma \cup \{ * \})^{n \times \ell}$ and $d \in \NN$.}
{Is there a `completion' $\mathbf{T} \in \Sigma^{n \times \ell}$ of $\mathbf{S}$ such that $\delta(v, \mathbf{T}) \le d$ for some vector $v \in \mathbf{T}$?}

Here, missing entries are denoted by~$*$ and $\delta$ denotes the Hamming distance.
In fact, our results for \textsc{MinRMC} 
also hold for the following more general problem:
\dprob{Constraint Radius Matrix Completion (ConRMC)}
{An incomplete matrix $\mathbf{S} \in (\Sigma \cup \{ * \})^{n \times \ell}$ and $d_1, \dots, d_n \in \NN$.}
{Is there a row vector $v \in \Sigma^\ell$ such that $\delta(v, \mathbf{S}[i]) \le d_i$ for each $i \in [n]$?}

\paragraph*{Related work.}
Our most important reference point is the work of Hermelin and 
Rozenberg~\cite{HR15} who analyzed the parameterized complexity of 
\textsc{MinRMC} (under the name \textsc{Closest String with 
Wildcards})
with respect to several problem-specific 
parameters (also see \Cref{table:rmcp}).
In particular, they provided fixed-parameter tractability results 
for the parameters 
number~$n$ of rows,  number~$\ell$ of columns, and radius~$d$ 
combined with maximum number~$k$ of missing entries per row.
However, we will show that their fixed-parameter tractability 
result for the combined parameter~$k+d$ is flawed.
Moreover, they
showed a computational complexity dichotomy for binary inputs 
between radius~1 (polynomial time) and radius~2 (NP-hard) (see 
\Cref{table:rmcp} for a complete overview). 

As mentioned before, Ganian et al.~\cite{GKOS18} started research on 
the parameterized complexity of two related matrix completion problems
(minimizing rank and minimizing number of distinct rows). 
Very recently, Eiben et al.~\cite{EGKOS19} studied generalizations of our 
problems by demanding that the completed matrix is clustered 
into several submatrices of small (local) radius---basically, our work
studies the case of a single cluster.
They proved fixed-parameter tractability (problem kernels of superexponential size) with respect to the combined parameter~$(c,d,r)$.
Here, $c$~is the number of clusters and $r$~is the minimum number of rows and columns covering all missing entries.
They also proved that dropping any of $c$, $d$, or~$r$ results in parameterized intractability even for binary alphabet.
Amir et al.~\cite{AFRS14} showed that the clustering version of \textsc{MinLRMC} on complete matrices with unbounded alphabet size is NP-hard when restricted to only two columns.
Note that fixed-parameter tractability for the clustering variant implies fixed-parameter tractability for \textsc{MinRMC} and \textsc{MinLRMC} with respect 
to~$d + r$.
Indeed, to reach for (more) practical algorithms, we consider an alternative parameterization by the maximum number~$k$ of missing entries in any row vector (which can be much smaller than~$r$).

\paragraph*{Our contributions.}
We survey our and previous results in \Cref{table:rmcp}. Notably, all of our results for \textsc{MinRMC} indeed also hold for 
the more general \textsc{ConRMC} when setting $d:=\max\{ d_1, d_2, \ldots, d_n\}$.

\begin{table}[t]
  \caption{Overview of known results and our new results for \textsc{MinRMC} and \textsc{MinLRMC}.
  Notation: $n$---number of rows, $\ell$---number of columns, $|\Sigma|$---alphabet size, $d$---distance bound, $k$---maximum number of missing entries in any row vector. Note that all our results for \textsc{MinRMC} also hold for \textsc{ConRMC}.}
  \label{table:rmcp}
  \begin{tabularx}{\textwidth}{l@{\hspace{10pt}}XmXm@{\hspace{0pt}}}
    \toprule
    Parameter & \textsc{MinRMC}          & Reference                              & \textsc{MinLRMC} & Reference \\
    \midrule
    $n$       & $O^*(2^{2^{O(n \log n)}})$       & \cite{HR15}            & $O^*(2^{O(n^2 \log n)})$ &\cref{lemma:minlrmc:fptn} \\
              & $O^*(2^{O(n^2 \log n)})$         & \cite{KKM17}           &                          & \\ [0.7ex]
    $\ell$    & $O^*(2^{\ell^2 / 2})$            & \cite{HR15}            & $O^*(\ell^{\ell})$       & \cref{thm:minlrmc:fptk} \\
              & $O^*(\ell^{\ell})$               & \cref{thm:rmcpfptwrtl} &  & \\ [0.5ex]
    $d = 1$   & $O(n \ell^2)$ for $|\Sigma| = 2$ & \cite{HR15}            & $O(n^2 \ell)$ & \cref{thm:minlrmc:d1} \\
              & $O(n \ell)$                      & \cref{thm:rmcpd1poly}  &  &\\
    $d = 2$   & NP-hard for $|\Sigma| = 2$       & \cite{HR15}            & NP-hard for $|\Sigma| = 2$ & \cite{HR15} \\ [0.7ex]
    $k$       & NP-hard for $k = 0$              & \cite{FL97}            & $O^*(k^k)$ & \cref{thm:minlrmc:fptk} \\
    $d + k$   & $O^*((d + 1)^{d + k})$           & \cref{theorem:fpt}     & & \\
    $d + k + |\Sigma|$ & $O^*(|\Sigma|^k \cdot d^d)$ & \cite{HR15}        & $O^*(|\Sigma|^k)$ & trivial \\ 
                       & $O^*(2^{4d + k} \cdot |\Sigma|^{d + k})$ & \cref{theorem:fpt2} & \\
    \bottomrule
  \end{tabularx}
\end{table}

Let us highlight a few of our new results in comparison with previous work.
For \textsc{MinRMC}, we give a linear-time algorithm for the case radius $d = 1$ 
and arbitrary alphabet.
This answers an open question of Hermelin and Rozenberg \cite{HR15}.
We also show fixed-parameter tractability with respect to the combined 
parameter~$d+k$, which was already claimed by Hermelin and 
Rozenberg~\cite{HR15} but was flawed, as we will point out 
by providing a counter-example to their algorithm.
Lastly, inspired by known results for \textsc{Closest String}, 
we give a more efficient algorithm for small alphabet size.

As regards \textsc{MinLRMC}, we show that it can be solved in polynomial time when $d = 1$. This yields a computational complexity dichotomy since 
\textsc{MinLRMC} is NP-hard for~$d=2$.
Moreover, we show that \textsc{MinLRMC} is fixed-parameter tractable with 
respect to the maximum number~$k$ of missing entries per row.
Remarkably, this stands in sharp contrast to \textsc{MinRMC}, 
which is NP-hard even for~$k=0$.

\section{Preliminaries}

For $m \le n \in \NN$, let $[m, n]:=\{ m, \dots, n \}$ and $[n]:=[1, n]$.

Let $\mathbf{T} \in \Sigma^{n \times \ell}$ be an $(n \times \ell)$-matrix over a finite alphabet $\Sigma$.
Let $i \in [n]$ and $j \in [\ell]$.
We use $\mathbf{T}[i, j]$ to denote the character in the~$i$-th row and~$j$-th column of~$\mathbf{T}$.
We use $\mathbf{T}[i, :]$ (or $\mathbf{T}[i]$ in short) to denote the \emph{row vector} $(\mathbf{T}[i, 1], \dots, \mathbf{T}[i, \ell])$ and $\mathbf{T}[:, j]$ to denote the \emph{column vector} $(\mathbf{T}[1, j], \dots, \mathbf{T}[n, j])^T$.
For any subsets $I \subseteq [n]$ and $J \subseteq [\ell]$, we write $\mathbf{T}[I, J]$ to denote the submatrix obtained by omitting rows in $[n] \setminus I$ and columns in $[\ell] \setminus J$ from $\mathbf{T}$.
We abbreviate $\mathbf{T}[I, [\ell]]$ and $\mathbf{T}[[n], J]$ as $\mathbf{T}[I, :]$ (or $\mathbf{T}[I]$ for short) and $\mathbf{T}[:, J]$, respectively.
We use the special character $*$ for a \emph{missing} entry.
A matrix $\mathbf{S} \in (\Sigma \cup \{ * \})^{n \times \ell}$ that contains a missing entry is called \emph{incomplete}.
We say that $\mathbf{T} \in \Sigma^{n \times \ell}$ is a \emph{completion} of $\mathbf{S} \in (\Sigma \cup \{ * \})^{n \times \ell}$ if either $\mathbf{S}[i, j] = *$ or $\mathbf{S}[i, j] = \mathbf{T}[i, j]$ holds for all $i \in [n]$ and $j \in [\ell]$.

Let $v, v' \in ( \Sigma \cup \{ * \})^\ell$ be row vectors and let $\sigma \in \Sigma \cup \{ * \}$.
We write $P_\sigma(v)$ to denote the set $\{ j \in [\ell] \mid v[j] = \sigma \}$ of column indices where the corresponding entries of~$v$ are~$\sigma$.
We write~$Q(v, v')$ to denote the set $\{ j \in [\ell] \mid v[j] \ne v'[j] \wedge v[j]\ne * \wedge v'[j]\ne * \}$ of column indices where~$v$ and~$v'$ disagree (not considering positions with missing entries).
The \emph{Hamming distance} between $v$ and $v'$ is $\delta(v, v') := |Q(v, v')|$.
For $\mathbf{S} \in (\Sigma \cup \{ * \})^{n \times \ell}$ and $v \in (\Sigma \cup \{ * \})^\ell$,
let $\delta(v, \mathbf{S}):=\max_{i \in [n]} \delta(v, \mathbf{S}[i])$.
The binary operation $v \oplus v'$ replaces the missing entries of~$v$ with the character in $v'$ in the corresponding position, given that $v'$ contains no missing entry.
We sometimes use string notation $\sigma_1 \sigma_2 \sigma_3$ to represent the row vector~$(\sigma_1, \sigma_2, \sigma_3)$.

\paragraph*{Parameterized Complexity.}

We sometimes use the $O^*$-notation which suppresses polynomial factors in the input size.
A \emph{parameterized problem} $\Pi$ is a set of instances $(I,k)\in \Sigma^* \times \NN$, where $k$ is called the \emph{parameter} of the instance.
A parameterized problem is \emph{fixed-parameter tractable} if $(I, k) \in \Pi$ can be determined in $f(k) \cdot |I|^{O(1)}$ time for an arbitrary computable function~$f$.
An algorithm with such a running time is called a \emph{fixed-parameter algorithm}.

\section{Linear-time algorithm for radius \texorpdfstring{$d = 1$}{d = 1}}
\label{sec:radmin:d}

Hermelin and Rozenberg \cite[Theorem 6]{HR15} gave a reduction from \textsc{MinRMC} to \textsc{2-SAT} for the case $|\Sigma| = 2$ and $d = 1$, resulting in an $O(n \ell^2)$-time algorithm.
We provide a more efficient reduction to \textsc{2-SAT}, exploiting the compact encoding $C_{\le 1}$ of the ``at-most-one'' constraint by Sinz~\cite{sinz2005towards}.
Let $L = \{ l_1, \dots, l_m \}$ be a set of $m$ literals.
The encoding uses $m - 1$ additional variables $r_1, \dots, r_{m - 1}$ and it is defined as follows:
\begin{align*}
  C_{\le 1}(L)
  &= (\neg l_1 \vee r_1) \wedge (\neg l_m \vee \neg r_{m - 1}) \\
  &\wedge \bigwedge_{2 \le j \le m - 1} \left((\neg l_{j} \vee \neg r_{j - 1}) \wedge (\neg l_{j} \vee r_j) \wedge (\neg r_{j - 1} \vee r_j) \right).
\end{align*}
Note that if $l_j$ is true for some $j \in [m]$, then $r_{j}, \dots, r_{m - 1}$ are all true and $r_{1}, \dots, r_{j - 1}$ are all false.
Hence, at most one literal in $L$ can be true.

Actually, our algorithm solves \textsc{ConRMC}, the generalization of \textsc{MinRMC} where the distance bound can be specified for each row vector individually.

\begin{theorem}
  \label{thm:rmcpd1poly}
If $\max_{i \in [n]} d_i = 1$, then \textsc{ConRMC} can be solved in $O(n \ell)$~time. 
\end{theorem}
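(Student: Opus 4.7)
The plan is to reduce \textsc{ConRMC} with $\max_{i \in [n]} d_i = 1$ to \textsc{2-SAT} via a formula of linear size and then invoke the linear-time \textsc{2-SAT} algorithm. For each column $j \in [\ell]$ let $\Sigma_j = \{\mathbf{S}[i, j] : i \in [n], \mathbf{S}[i, j] \ne *\}$ denote the set of characters occurring in column~$j$, and for each $\sigma \in \Sigma_j$ introduce a Boolean variable $x_{j, \sigma}$ with intended meaning ``$v[j] = \sigma$.'' Since $\sum_{j} |\Sigma_j|$ is at most the number of non-$*$ entries, we introduce $O(n \ell)$ such variables in total. Note that it is safe to restrict $v[j]$ to $\Sigma_j$ (or any character if $\Sigma_j = \emptyset$), as choosing any other character only adds disagreements.

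Three groups of 2-CNF constraints are then added. \textbf{(i)} For each column~$j$, apply $C_{\le 1}$ to the literals $\{x_{j, \sigma} : \sigma \in \Sigma_j\}$, forbidding two distinct characters from being simultaneously assigned to the same column. \textbf{(ii)} For every row~$i$ with $d_i = 0$ and every~$j$ with $\mathbf{S}[i, j] \ne *$, add the unit clause $x_{j, \mathbf{S}[i, j]}$, forcing exact agreement. \textbf{(iii)} For every row~$i$ with $d_i = 1$, apply $C_{\le 1}$ to the literals $\{\neg x_{j, \mathbf{S}[i, j]} : j \in [\ell], \mathbf{S}[i, j] \ne *\}$, bounding the number of disagreements by one. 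Because $C_{\le 1}$ on~$m$ literals uses only $O(m)$ 2-clauses and $O(m)$ auxiliary variables, the full formula has $O(n\ell)$ variables and clauses in total and can be constructed in $O(n\ell)$ time.

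The main obstacle is correctness, in particular justifying that I do \emph{not} enforce an ``at least one $x_{j,\sigma}$ is true'' constraint per column (which would be a clause of width $|\Sigma_j|$ and hence not expressible in 2-CNF). Given a satisfying assignment, I recover~$v$ by setting $v[j]$ to the unique $\sigma$ with $x_{j, \sigma}$ true if such a~$\sigma$ exists, and to an arbitrary character of~$\Sigma_j$ otherwise. The key observation is that whenever no $x_{j, \sigma}$ is true at column~$j$, every non-$*$ entry $\mathbf{S}[i, j]$ is already counted as a disagreement on the SAT side (since $\neg x_{j, \mathbf{S}[i, j]}$ is true), so the actual disagreement count for row~$i$ is bounded by the SAT-side count, which by~(iii) (resp.\ (ii)) is at most~$d_i$. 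Conversely, every feasible completion yields a satisfying assignment by setting $x_{j, \sigma}$ true exactly when $\sigma = v[j]$.

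Finally, a linear-time \textsc{2-SAT} algorithm applied to the formula runs in $O(n\ell)$ time, and~$v$ can be read off in $O(n\ell)$ additional time, giving the claimed total running time.
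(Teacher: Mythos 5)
Your proposal is correct and follows essentially the same route as the paper: a reduction to \textsc{2-SAT} using the Sinz $C_{\le 1}$ encoding, with the same three clause groups (at most one character per column, unit clauses for distance-$0$ rows, at-most-one-disagreement for distance-$1$ rows) and the same resolution of the ``no character assigned to column $j$'' case when reading off the solution vector. Your restriction of the variables to the characters $\Sigma_j$ actually occurring in column $j$ is a minor refinement that makes the $O(n\ell)$ bound on the number of variables immediate, but it does not change the argument.
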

\begin{proof}
  We reduce \textsc{ConRMC} to \textsc{2-SAT}.
  Let $I_d := \{ i \in [n] \mid d_i = d \}$ be the row indices for which the distance bound is $d$ for $d \in \{ 0, 1 \}$.
  We define a variable $x_{j, \sigma}$ for each $j \in [\ell]$ and $\sigma \in \Sigma$.
  The intuition behind our reduction is that the $j$-th entry of the solution vector $v$ becomes $\sigma$ when $x_{j, \sigma}$ is true.
  We give the construction of a 2-CNF formula $\phi$ in three parts $\phi_1, \phi_2, \phi_3$ (that is, $\phi = \phi_1 \wedge \phi_2 \wedge \phi_3$).
  \begin{itemize}
    \item
      Let $X_j = \{ x_{j, \sigma} \mid \sigma \in \Sigma \}$ for each $j \in [\ell]$.
      The first subformula will ensure that at most one character is assigned to each entry of the solution vector $v$:
      \begin{align*}
        \phi_1 &= \bigwedge_{j \in [\ell]} C_{\le 1} (X_j).
      \end{align*}
    \item
      Subformula $\phi_2$ handles distance-0 constraints:
      \begin{align*}
        \phi_2 &= \bigwedge_{i \in I_0} \bigwedge_{\substack{j \in [\ell] \\ \mathbf{S}[i, j] \ne *}} (x_{j, \mathbf{S}[i, j]}).
      \end{align*}
    \item
      Finally, subformula~$\phi_3$ guarantees that the solution vector $v$ deviates from each row vector of $\mathbf{S}[I_1]$ in at most one position.
      \begin{align*}
        \phi_3 &= \bigwedge_{i \in I_1} C_{\le 1} (\{ \neg x_{j, \mathbf{S}[i, j]} \mid j \in [\ell], \mathbf{S}[i, j] \ne * \}).
      \end{align*}
  \end{itemize}
  Note that our construction uses $O(|\Sigma| \cdot \ell)$ variables and $O((n + |\Sigma|) \cdot \ell) = O(n \ell)$ clauses.
  We prove the correctness of the reduction.

  $(\Rightarrow)$
  Suppose that there exists a vector $v \in \Sigma^\ell$ such that $\delta(v, \mathbf{S}[i]) \le d_i$ holds for each $i \in [n]$.
  For each $j \in [\ell]$ and $\sigma \in \Sigma$, we set $x_{j, \sigma}$ to true if $v[j] = \sigma$, and false otherwise.
  It is easy to see that this truth assignment satisfies $\phi$.

  $(\Leftarrow)$
  Suppose that there exists a satisfying truth assignment $\varphi$.
  Let $J^*$ denote the column indices $j \in [\ell]$ such that $\varphi(x_{j, \sigma}) = 0$ for all $\sigma \in \Sigma$.
  Note that at most one variable in~$X_j$ is set to true in $\varphi$ for each $j \in [\ell]$.
  It follows that, for each $j \in [\ell] \setminus J^*$, there exists exactly one character $\sigma_j \in \Sigma$ satisfying $\varphi(x_{j, \sigma_j}) = 1$ and we assign $v[j] = \sigma_j$.
  For each $j \in J^*$, we set $v[j] = \sigma^*$ for some arbitrary character $\sigma^* \in \Sigma$.
  The formula $\phi_2$ ensures that $\delta(v, \mathbf{S}[i]) = 0$ holds for each $i \in I_0$.
  Moreover, $\phi_3$ ensures that there is at most one column index $j \in [\ell]$ such that $\mathbf{S}[i, j] \ne *$ and $\mathbf{S}[i, j] \ne v[j]$ for each $i \in I_1$.
\end{proof}

Note that \textsc{MinRMC} (and thus \textsc{ConRMC}) is 
NP-hard for $|\Sigma| = 2$ and $d = 2$~\cite{HR15}.
Thus, our result implies a complete complexity dichotomy regarding~$d$.
We remark that this dichotomy also holds for \textsc{MinLRMC}
since there is a simple reduction from \textsc{MinLRMC} to \textsc{ConRMC}.
To solve an instance $(\mathbf{S}, d)$ of \textsc{MinLRMC},
we solve~$n$ instances of \text{ConRMC}:
For each $i \in [n]$, we solve the instance $(\mathbf{S}, d_1, \dots, d_n)$ where $d_{i'} = d$ for each $i' \in [n] \setminus \{ i \}$ and $d_i = 0$.
Clearly, $(\mathbf{S}, d)$ is a \textbf{Yes}-instance if and only if at least one \textsc{ConRMC}-instance is a \textbf{Yes}-instance.
This yields the following.

\begin{corollary}
  \label{thm:minlrmc:d1}
  \textsc{MinLRMC} can be solved in $O(n^2 \ell)$ time when $d = 1$.
\end{corollary}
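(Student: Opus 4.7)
The plan is to reduce the problem to $n$ instances of \textsc{ConRMC} with all distance bounds at most $1$, and then invoke \cref{thm:rmcpd1poly}. The key observation is that in a \textsc{MinLRMC} solution, the center vector $v$ must coincide with some completed row $\mathbf{T}[i]$. Since $v = \mathbf{T}[i]$ is itself a completion of $\mathbf{S}[i]$, asking $v = \mathbf{T}[i]$ is equivalent to requiring that $v$ agrees with $\mathbf{S}[i]$ wherever $\mathbf{S}[i]$ is defined, that is, $\delta(v, \mathbf{S}[i]) = 0$. Requiring $\delta(v, \mathbf{T}[i']) \le d$ for all other rows $i' \ne i$ then boils down to $\delta(v, \mathbf{S}[i']) \le d$, since the remaining missing entries of $\mathbf{S}[i']$ can be filled to match $v$.

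Based on this, I would guess the ``center row'' $i$, and for each guess construct the \textsc{ConRMC} instance with row distance bounds $d_i := 0$ and $d_{i'} := d$ for $i' \ne i$. For $d = 1$, the maximum row bound in each instance is $1$, so \cref{thm:rmcpd1poly} applies and solves it in $O(n\ell)$ time. I would then output \textbf{Yes} if and only if at least one of the $n$ guesses yields a \textbf{Yes}-instance.

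For correctness, the forward direction is straightforward: a valid \textsc{MinLRMC} solution with center $v = \mathbf{T}[i]$ yields a feasible vector for the $i$-th \textsc{ConRMC} instance, since $\delta(v, \mathbf{S}[i]) = 0$ (because $v$ is a completion of $\mathbf{S}[i]$) and $\delta(v, \mathbf{S}[i']) \le \delta(v, \mathbf{T}[i']) \le d$. Conversely, given a solution $v$ for the $i$-th \textsc{ConRMC} instance, define $\mathbf{T}[i] := v$ and complete each other row $\mathbf{S}[i']$ by filling every remaining $*$ with the corresponding entry of $v$; this completion satisfies $\delta(v, \mathbf{T}[i']) = \delta(v, \mathbf{S}[i']) \le d$, so $v \in \mathbf{T}$ is a valid local center.

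The total running time is $n$ calls to the algorithm of \cref{thm:rmcpd1poly}, each running in $O(n\ell)$ time, for an overall bound of $O(n^2 \ell)$. There is no real obstacle here since the reduction already appears in the paragraph preceding the corollary statement; the only subtle point is verifying that setting $d_i = 0$ correctly encodes ``$v$ is a completion of $\mathbf{S}[i]$'' and that unrestricted $*$ entries in the other rows can always be filled consistently, which both follow immediately from the definitions.
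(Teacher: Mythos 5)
Your proposal is correct and follows exactly the reduction the paper itself gives in the paragraph preceding the corollary: guess the center row $i$, set $d_i = 0$ and $d_{i'} = d$ for $i' \ne i$, and solve each of the $n$ resulting \textsc{ConRMC} instances via \cref{thm:rmcpd1poly} in $O(n\ell)$ time. Your added verification of both directions of the equivalence is a welcome elaboration of what the paper dismisses as ``clearly.''
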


Since \textsc{ConRMC} is solvable in $O^*(2^{O(n^2 \log n)})$ time~\cite{KKM17}, we also obtain the following result, where the running time bound 
only depends on the number~$n$ of rows.
\begin{corollary}
  \label{lemma:minlrmc:fptn}
  \textsc{MinLRMC} can be solved in $O^*(2^{O(n^2 \log n)})$ time.
\end{corollary}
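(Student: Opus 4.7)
The plan is to combine the reduction already exhibited in the paragraph preceding Corollary~\ref{thm:minlrmc:d1} with the algorithm of~\cite{KKM17} for \textsc{ConRMC}. Specifically, I would first observe that an instance $(\mathbf{S}, d)$ of \textsc{MinLRMC} is a Yes-instance precisely when, for some $i \in [n]$, the \textsc{ConRMC} instance obtained by setting $d_i = 0$ and $d_{i'} = d$ for every $i' \in [n] \setminus \{i\}$ is a Yes-instance. The reason is that any feasible center vector $v$ for this \textsc{ConRMC} instance must satisfy $\delta(v, \mathbf{S}[i]) = 0$, which forces $v$ to be a completion of the $i$-th row; extending $v$ to a completion $\mathbf{T}$ of the remaining rows within distance $d$ then provides a valid \textsc{MinLRMC} solution whose center lies in $\mathbf{T}$. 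Conversely, if $\mathbf{T}$ is a completion and $v = \mathbf{T}[i]$ witnesses the local radius bound, then $v$ solves the $i$-th \textsc{ConRMC} instance.

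With this equivalence in hand, I would next invoke the $O^*(2^{O(n^2 \log n)})$-time algorithm of~\cite{KKM17} to solve each of the $n$ constructed \textsc{ConRMC} instances, returning Yes if and only if at least one of them is a Yes-instance. Iterating over the $n$ choices of $i$ multiplies the running time by a factor of $n$, which is absorbed by the $O^*$ notation and yields the claimed $O^*(2^{O(n^2 \log n)})$ overall bound.

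I do not anticipate any genuine obstacle: the whole argument is a short deduction from a reduction that has already been justified in the discussion after Theorem~\ref{thm:rmcpd1poly} and from an off-the-shelf algorithm for the more general problem. The only point worth double-checking is that the convention $d_i = 0$ together with the definition of $\delta$ on vectors containing $*$ correctly enforces that $v$ is a completion of $\mathbf{S}[i]$; this is immediate because $\delta(v,\mathbf{S}[i]) = 0$ prohibits any disagreement on the non-missing positions of $\mathbf{S}[i]$, and the missing positions of $\mathbf{S}[i]$ can always be filled in to match $v$.
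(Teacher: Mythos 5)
Your proposal is correct and follows exactly the paper's own route: reduce \textsc{MinLRMC} to $n$ instances of \textsc{ConRMC} by setting $d_i = 0$ for the candidate center row (the reduction described just before \Cref{thm:minlrmc:d1}), then apply the $O^*(2^{O(n^2 \log n)})$-time algorithm of~\cite{KKM17} to each. No issues.
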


Finally, we remark that \textsc{ConRMC} can be solved in linear time 
for binary alphabet $\Sigma = \{0,1\}$ if 
$d_i \ge \ell - 1$ for all $i \in [n]$ (the problem remains NP-hard in the case of unbounded alphabet size \cite{LMS18} even if $d_i \ge \ell - 1$ for all $i \in [n]$):
First, we remove each row vector with distance bound $\ell$.
We also remove every row vector with at least one missing entry since it has distance at most $\ell - 1$ from any vector of length $\ell$.
We then remove every duplicate row vector.
This can be achieved in linear time: 
We sort the row vectors lexicographically using radix sort and we compare each row vector to the adjacent row vectors in the sorted order.
We return \textbf{Yes} if and only if there are at most $2^\ell - 1$ row vectors, because each distinct row vector $u \in \{ 0, 1 \}^\ell$ excludes exactly one row vector $\overline{u} \in \{ 0, 1 \}^\ell$ where $\overline{u}[j] = 1 - u[j]$ for each $j \in [\ell]$.
Summarizing, we arrive at the following.
\begin{proposition}
	If $\Sigma = \{0,1\}$ and $d_i \ge \ell - 1$ for all $i \in [n]$, then \textsc{ConRMC} can be solved in linear time.
\end{proposition}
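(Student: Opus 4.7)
The plan is to reduce each row's constraint to a single ``forbidden'' vector, after which the problem becomes a simple counting question. I would first eliminate the trivially satisfied rows: by the paper's definition of $\delta$, positions where $\mathbf{S}[i]$ contains a $*$ do not contribute to the Hamming distance, so $\delta(v, \mathbf{S}[i]) \le \ell - |P_*(\mathbf{S}[i])|$ for every $v \in \{0,1\}^\ell$. Thus a row can be discarded if $d_i \ge \ell$, or if $\mathbf{S}[i]$ has at least one missing entry (since then $\delta(v, \mathbf{S}[i]) \le \ell - 1 \le d_i$). This pass takes $O(n\ell)$ time.

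The surviving rows are elements of $\{0,1\}^\ell$ with $d_i = \ell - 1$. For such a row $u$ and any $v \in \{0,1\}^\ell$, the condition $\delta(v, u) \le \ell - 1$ is equivalent to $v$ agreeing with $u$ in at least one position, i.e., $v \ne \overline{u}$, where $\overline{u}$ denotes the bitwise complement. Hence each surviving row forbids exactly one element of $\{0,1\}^\ell$, and two rows forbid the same element iff they are identical. The instance is therefore a \textbf{Yes}-instance if and only if the number of distinct surviving rows is strictly less than $2^\ell$.

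For the linear running time, I would sort the binary surviving rows by radix sort in $O(n\ell)$ time, then make a single sweep comparing adjacent rows to remove duplicates and count the distinct ones; comparison with $2^\ell$ needs only $O(\ell)$ additional time (and can be short-circuited whenever the distinct count obviously fits below the threshold). The only subtlety worth flagging is the first step: under the Hamming distance convention used in the paper, even a single missing entry makes a row trivially satisfied, which is what keeps the problem tractable in the presence of wildcards. Once this is observed, the rest of the argument reduces to a straightforward pigeonhole count on $\{0,1\}^\ell$.
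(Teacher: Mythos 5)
Your proof is correct and follows essentially the same route as the paper's: discard rows with $d_i \ge \ell$ or with a missing entry, observe that each remaining row $u$ forbids exactly its complement $\overline{u}$, deduplicate via radix sort, and answer \textbf{Yes} iff fewer than $2^\ell$ distinct rows survive. Your explicit justification of why a row with a wildcard is trivially satisfied is a welcome clarification of a step the paper states more tersely.
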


\section{Parameter number \texorpdfstring{$\ell$}{l} of columns}
\label{sec:radmin:l}

Hermelin and Rozenberg~\cite[Theorem 3]{HR15} showed that one can solve \textsc{MinRMC} in $O(2^{\ell^2 / 2} \cdot n \ell)$ time using a search tree algorithm.
We use a more refined recursive step to obtain a better running time (see \Cref{algo:wrtl}).
In particular we employ a trick used by Gramm et al.~\cite{gramm2003fixed} in order to reduce the search space to $d + 1$ subcases.
Note that for nontrivial instances clearly~$d < \ell$.

\begin{algorithm}[t]
  \caption{Improved algorithm for \textsc{ConRMC} (based on Hermelin and Rozenberg~\cite{HR15})}
  \label{algo:wrtl}
  \begin{algorithmic}[1]
    \Input An incomplete matrix \( \mathbf{S} \in (\Sigma \cup \{ * \})^{n \times \ell} \) and \( d_1, \dots, d_n \in \NN \).
    \Output Decide whether there exists a row vector $v \in \Sigma^\ell$ with $d(v, \mathbf{S}[i]) \le d_i$ for all $i \in [n]$. 
    \State \algorithmicif\ $d_i < 0$ for some $i \in [n]$ \algorithmicthen\ \Return \textbf{No}. \label{algo:l:obvno}
    \State \algorithmicif\ $\ell - |P_*(\mathbf{S}[i])| \le d_i$ for all $i \in [n]$ \algorithmicthen\ \Return \textbf{Yes}. \label{algo:l:obvyes}
    \Statex \hfill\algorithmiccomment{$|P_*(\mathbf{S}[i])|$ is the number of missing entries in $\mathbf{S}[i]$}
    \State Choose any $i \in [n]$ such that $\ell - |P_*(\mathbf{S}[i])| > d_i$.
    \State Choose any $R \subseteq [\ell] \setminus P_*(\mathbf{S}[i])$ with $|R| = d_i + 1$. \label{algo:l:subcasetrick}
    \ForAll{$j \in R$} \label{algo:l:subcasetrick2}
      \State Let $\mathbf{S}' = \mathbf{S}[:, [\ell] \setminus \{ j \}]$ and $d_{i'}' = d_i - \delta(\mathbf{S}[i, j], \mathbf{S}[i', j])$ for each $i' \in [n]$.
      \State \algorithmicif\ recursion on $(\mathbf{S}', d_1', \dots, d_n')$ returns \textbf{Yes} \algorithmicthen\ \Return \textbf{Yes}.
    \EndFor
    \State \Return \textbf{No}.
  \end{algorithmic}
\end{algorithm}

\begin{theorem}
  \label{theorem:wrtld}
  For $d := \max_{i \in [n]} d_i$,
  \textsc{ConRMC} can be solved in $O((d + 1)^\ell \cdot n \ell)$ time. 
\end{theorem}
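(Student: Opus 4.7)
The plan is to verify the correctness of Algorithm~\ref{algo:wrtl} and then bound its running time.

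For correctness, I would first dispatch the two base cases. Line~\ref{algo:l:obvno} correctly returns \textbf{No} because $d_i < 0$ means row $i$ has already accrued more mismatches with the partially fixed solution vector than permitted. Line~\ref{algo:l:obvyes} correctly returns \textbf{Yes}: when $\ell - |P_*(\mathbf{S}[i])| \le d_i$ holds for every row $i$, each row has at most $d_i$ non-missing entries, and since $\delta$ only counts disagreements at positions non-missing in both arguments, \emph{any} choice of $v \in \Sigma^\ell$ satisfies $\delta(v, \mathbf{S}[i]) \le d_i$.

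The key step is the branching. Pick any row $i$ with $\ell - |P_*(\mathbf{S}[i])| > d_i$ and any $R \subseteq [\ell] \setminus P_*(\mathbf{S}[i])$ with $|R| = d_i + 1$. If some solution vector $v$ satisfied $v[j] \ne \mathbf{S}[i, j]$ for every $j \in R$, then all positions in $R$ would contribute to $\delta(v, \mathbf{S}[i])$ (they are non-missing in $\mathbf{S}[i]$ and $v$ has no missing entries), forcing $\delta(v, \mathbf{S}[i]) \ge d_i + 1$, a contradiction. Hence any solution must satisfy $v[j] = \mathbf{S}[i, j]$ for at least one $j \in R$, so branching over all $j \in R$ is exhaustive. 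Once $v[j]$ is fixed to $\mathbf{S}[i,j]$, the cost incurred at column $j$ for each row $i'$ is exactly $\delta(\mathbf{S}[i, j], \mathbf{S}[i', j])$ (which is $0$ if $\mathbf{S}[i', j] = *$ or $\mathbf{S}[i', j] = \mathbf{S}[i, j]$, and $1$ otherwise). Updating $d'_{i'} := d_{i'} - \delta(\mathbf{S}[i, j], \mathbf{S}[i', j])$ and deleting column $j$ thus yields an equivalent instance on $\ell - 1$ columns, so correctness follows by induction on the column count.

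For the running time, each recursive call deletes one column, bounding the recursion depth by $\ell$. Each call branches into at most $|R| = d_i + 1 \le d + 1$ subcases, so the recursion tree has at most $(d+1)^\ell$ leaves. The per-call work is $O(n\ell)$ (checking the base-case conditions, selecting $R$, and updating the instance). Multiplying gives $O((d+1)^\ell \cdot n \ell)$ overall. There is no real obstacle here; the only nuance is that the pigeonhole argument depends on $R$ lying entirely within the non-missing positions of row $i$, which is precisely why line~\ref{algo:l:subcasetrick} restricts the choice of $R$ to $[\ell] \setminus P_*(\mathbf{S}[i])$.
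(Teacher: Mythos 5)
Your proposal is correct and follows essentially the same route as the paper's proof: induction on the number of columns, the pigeonhole argument that a solution must agree with $\mathbf{S}[i]$ on at least one position of $R$ (which is why $R$ must avoid $P_*(\mathbf{S}[i])$), and the $(d+1)^\ell$ bound on the search-tree size with $O(n\ell)$ work per node. You in fact spell out the equivalence of the reduced instance after deleting column $j$ slightly more explicitly than the paper does, but the argument is the same.
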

\begin{proof}
  We prove that \Cref{algo:wrtl} is correct by induction on $\ell$.
  More specifically, we show that it returns \textbf{Yes} if there exists a vector $v \in \Sigma^\ell$ that satisfies $\delta(\mathbf{S}[i], v) \le d_i$ for all $i \in [n]$.
  It is easy to see that the algorithm is correct for the base case $\ell = 0$, because it returns \textbf{Yes} if $d_i$ is nonnegative for all $i \in [n]$ and \textbf{No} otherwise (Lines~\ref{algo:l:obvno}~and~\ref{algo:l:obvyes}).
  Consider the case $\ell > 0$.
  The terminating conditions in Lines~\ref{algo:l:obvno}~and~\ref{algo:l:obvyes} are clearly correct.
  We show that branching on~$R$ is correct in Lines~\ref{algo:l:subcasetrick} and~\ref{algo:l:subcasetrick2}.
  If $v[j] \ne \mathbf{S}[i, j]$ holds for all $j \in R$, then we have a contradiction $\delta(v, \mathbf{S}[i]) \ge |R| > d_i$.
  Thus the branching on $R$ leads to a correct output.
  Now the induction hypothesis ensures that the recursion on $\mathbf{S}[:, [\ell] \setminus \{ j \}]$ (notice that it has exactly one column less) returns a desired output.
  This concludes that our algorithm is correct.

  As regards the time complexity, note that each node in the search tree has at most $d + 1$ children.
  Moreover, the depth of the search tree is at most $\ell$ because the number of columns decreases for each recursion.
  Since each recursion step only requires linear (that is, $O(n \ell)$) time, the overall running time is in $O((d + 1)^\ell \cdot n \ell)$.
\end{proof}

Since $d < \ell$ for nontrivial input instances, \Cref{theorem:wrtld} yields 
``linear-time fixed-parameter tractability'' with respect to~$\ell$,
meaning an exponential speedup over the previous result due to Hermelin 
and Rozenberg~\cite{HR15}.

\begin{corollary}
  \label{thm:rmcpfptwrtl}
  \textsc{ConRMC} can be solved in $O(n \ell^{\ell +1})$ time.
\end{corollary}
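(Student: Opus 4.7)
The plan is to obtain this running time as a direct corollary of \Cref{theorem:wrtld}, which bounds the runtime by $O((d+1)^\ell \cdot n\ell)$ for $d := \max_{i \in [n]} d_i$. All that needs to be argued is that, without loss of generality, $d \leq \ell - 1$, so that $(d+1)^\ell \leq \ell^\ell$ and the total runtime collapses to $O(\ell^\ell \cdot n\ell) = O(n\ell^{\ell+1})$.

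To justify $d \leq \ell - 1$, I would first preprocess the instance: any row $i$ with $d_i \geq \ell$ can be discarded, because by the definition of $\delta$ (which only counts positions where neither argument is $*$) we have $\delta(v, \mathbf{S}[i]) \leq \ell \leq d_i$ for every $v \in \Sigma^\ell$, so the $i$-th constraint is automatically satisfied. If every row is removed in this step, the instance is trivially a Yes-instance (pick any completion); otherwise the reduced instance is equivalent to the original and satisfies $\max_i d_i \leq \ell - 1$. This sweep takes $O(n\ell)$ time, which is dominated by the main algorithm.

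Applying \Cref{theorem:wrtld} to the preprocessed instance then gives the claimed $O(n\ell^{\ell+1})$ bound via the estimate $(d+1)^\ell \leq \ell^\ell$. I do not foresee a genuine obstacle: the whole argument is a one-line inequality together with the elementary observation that an overly generous distance budget cannot be violated. The only point worth stating cleanly is the edge case where preprocessing removes every row, which is handled by returning Yes immediately.
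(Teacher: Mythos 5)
Your proposal is correct and matches the paper's own argument: the paper likewise notes that $d < \ell$ for nontrivial instances and then applies the $O((d+1)^\ell \cdot n\ell)$ bound of \Cref{theorem:wrtld}. Your explicit preprocessing step discarding rows with $d_i \ge \ell$ simply spells out what the paper leaves implicit in the word ``nontrivial.''
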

We remark that this algorithm cannot be significantly improved assuming the ETH.\footnote{The Exponential Time Hypothesis asserts that \textsc{3-SAT} cannot be solved in $O^*(2^{o(n + m)})$ time for a 3-CNF~formula with $n$~variables and $m$~clauses.}
It is known that there is no $\ell^{o(\ell)} \cdot n^{O(1)}$-time algorithm for the special case 
\textsc{Closest String} unless the ETH fails~\cite{LMS18}. 
The running time of our algorithm matches this lower bound (up to a constant in the exponent) and therefore there is presumably no substantially faster algorithm with respect to~$\ell$.

As a consequence of \Cref{thm:rmcpfptwrtl}, we obtain a fixed-parameter algorithm for \textsc{MinLRMC} with respect to the maximum number~$k$ of missing
entries per row in the input matrix.
\begin{corollary}
  \label{thm:minlrmc:fptk}
  \textsc{MinLRMC} can be solved in time $O(n^2 \ell + n^2 k^{k + 1})$.
\end{corollary}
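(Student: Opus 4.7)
My plan is to iterate over the $n$ possible choices of which row of the completion plays the role of the center vector $v$. Fix such a candidate index $i^* \in [n]$, and let $J^* := P_*(\mathbf{S}[i^*])$ be the set of missing positions of $\mathbf{S}[i^*]$, so $|J^*| \le k$. Since $v$ must equal the completion of row $i^*$, the entry $v[j]$ is already determined whenever $j \notin J^*$, namely $v[j] = \mathbf{S}[i^*, j]$. Only the entries $v[j]$ for $j \in J^*$ remain to be chosen.

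Next, for each other row $i \in [n] \setminus \{i^*\}$ I would precompute the ``forced'' partial distance
\[
D_i := |\{\, j \in [\ell] \setminus J^* \mid \mathbf{S}[i, j] \neq * \text{ and } \mathbf{S}[i, j] \neq \mathbf{S}[i^*, j] \,\}|,
\]
which counts the mismatches contributed by columns outside $J^*$ (the missing entries of row $i$ in those columns can be completed to match $v$ at no cost). If $D_i > d$ for some $i$, the current choice of $i^*$ fails. Otherwise, on the restricted column set $J^*$, the remaining task is to find a vector $v_{J^*} \in \Sigma^{|J^*|}$ satisfying $\delta(v_{J^*}, \mathbf{S}[i, J^*]) \le d - D_i$ for every $i \in [n] \setminus \{i^*\}$ (and with no constraint for $i = i^*$, which is vacuously satisfied). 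This is exactly an instance of \textsc{ConRMC} on the submatrix $\mathbf{S}[:, J^*]$ with $n$ rows and at most $k$ columns and per-row distance bounds $d - D_i$. By \Cref{thm:rmcpfptwrtl} it can be decided in $O(n \cdot k^{k+1})$ time.

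For each of the $n$ candidate centers $i^*$ the preprocessing of the $D_i$'s costs $O(n\ell)$ time, and the subsequent \textsc{ConRMC} call costs $O(n k^{k+1})$ time, giving an overall running time of $O(n^2\ell + n^2 k^{k+1})$ as claimed. For correctness, the $(\Leftarrow)$ direction is immediate: if some iteration succeeds, stitching together $v$ with the obvious matching-completion of the remaining missing entries yields a valid completion whose row $i^*$ is within distance $d$ of every other row. For the $(\Rightarrow)$ direction, a \textbf{Yes}-instance admits a completion $\mathbf{T}$ and an index $i^*$ with $\delta(\mathbf{T}[i^*], \mathbf{T}[i]) \le d$ for all $i$; taking $v := \mathbf{T}[i^*]$, restricting to $J^*$, and observing that missing entries in row $i$ can always be completed to agree with $v$ shows that the corresponding \textsc{ConRMC}-subinstance is satisfiable.

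The only subtle point, and the main thing to verify carefully, is the accounting of missing entries across rows: I have to argue that a position $j \in J^*$ with $\mathbf{S}[i,j] = *$ contributes nothing (it can be set to $v[j]$), while a position $j \notin J^*$ with $\mathbf{S}[i,j] = *$ also contributes nothing (it can be set to the already-fixed $v[j] = \mathbf{S}[i^*,j]$). Once this bookkeeping is made precise, the reduction to a size-$|J^*|$ \textsc{ConRMC} instance becomes transparent, and the running time follows by summing over the $n$ choices of $i^*$.
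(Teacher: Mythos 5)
Your proposal is correct and matches the paper's proof essentially verbatim: both iterate over the $n$ candidate center rows $i^*$ and reduce to a \textsc{ConRMC} instance on the submatrix $\mathbf{S}[:, P_*(\mathbf{S}[i^*])]$ with per-row bounds $d - \delta(\mathbf{S}[i^*], \mathbf{S}[i])$ (your $D_i$ is exactly this Hamming distance under the paper's convention of ignoring missing entries), then apply \Cref{thm:rmcpfptwrtl}. The bookkeeping of missing entries and the running-time accounting are also the same.
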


\begin{proof}
  For each $i \in [n]$, we construct an \textsc{ConRMC}-instance, where the input matrix is
$\mathbf{S}_i = \mathbf{S}[:, P_*(\mathbf{S}[i])]$
and
$d_{i, i'} = d - \delta(\mathbf{S}[i], \mathbf{S}[i'])$ for each $i' \in [n]$.
We return \textbf{Yes} if and only if there is a \textbf{Yes}-instance $(\mathbf{S}_i, d_{i, 1}, \dots, d_{i, n})$ of \textsc{ConRMC}.
Each \textsc{ConRMC}-instance requires $O(n \ell)$ time to construct, and $O(n k^{k + 1})$ time to solve, because $\mathbf{S}_i$ contains at most~$k$ columns.
\end{proof}

\section{Combined parameter \texorpdfstring{$d + k$}{d + k}}
\label{sec:radmin:dk}

In this section we generalize two algorithms (one by Gramm et al.~\cite{gramm2003fixed} and one by Ma and Sun~\cite{ma2009more}) for the special case of \textsc{MinRMC} in which the input matrix is complete (known as the \textsc{Closest String} problem) to the case of incomplete matrices.
We will describe both algorithms briefly.
In fact, both algorithms solve the special case of \textsc{ConRMC}, referred to as \textsc{Neighboring String} (generalizing \textsc{Closest String} by 
allowing row-individual distances), where the input matrix is complete.

\dprob{Neighboring String}
{A matrix $\mathbf{T} \in \Sigma^{n \times \ell}$ and $d_1, \dots, d_n \in \NN$}
{Is there a row vector $v \in \Sigma^\ell$ such that $\delta(v, \mathbf{T}[i]) \le d_i$ for each $i \in [n]$?}

\begin{algorithm}[t]
  \caption{Algorithm for \textsc{Neighboring String} by Gramm et al.~\cite{gramm2003fixed}}
  \label{algo:gnr} 
  \begin{algorithmic}[1]
    \Input A matrix \( \mathbf{T} \in (\Sigma \cup \{ * \})^{n \times \ell} \) and \( d_1, \dots, d_n \in \NN \).
    \Output Decide whether there exists a row vector $v \in \Sigma^\ell$ with $d(v, \mathbf{T}[i]) \le d_i$ for all $i \in [n]$.
    \State \algorithmicif\ $d_i < 0$ for some $i \in [n]$ \algorithmicthen\ \Return \textbf{No}. \label{algo:trino}
    \State \algorithmicif\ $\delta(\mathbf{T}[1], \mathbf{T}[i]) \le d_i$ for all $i \in [n]$ \algorithmicthen\ \Return \textbf{Yes}.
    \State Choose any $i \in [n]$ such that $\delta(\mathbf{T}[1], \mathbf{T}[i]) > d_i$.
    \State Choose any $Q' \subseteq Q(\mathbf{T}[1], \mathbf{T}[i])$ with $|Q'| = d_i + 1$.
    \ForAll{$j \in Q'$} \label{algo:branchd}
      \State Let $\mathbf{T}' = \mathbf{T}[:, [\ell] \setminus \{ j \}]$ and $d_{i'}' = d_i - \delta(\mathbf{T}[i, j], \mathbf{T}[i', j])$ for each $i' \in [n]$.
      \State \algorithmicif\ recursion on $(\mathbf{T}', d_1', \dots, d_n')$ returns \textbf{Yes} \algorithmicthen\ \Return \textbf{Yes}.
    \EndFor
    \State \Return \textbf{No}.
  \end{algorithmic}
\end{algorithm}

The algorithm of Gramm et al.~\cite{gramm2003fixed} is given in \Cref{algo:gnr}.
First, it determines whether the first row vector $\mathbf{T}[1]$ is a solution.
If not, then it finds another row vector $\mathbf{T}[i]$ that differs from $\mathbf{T}[1]$ on more than $d_i$ positions and branches on the column positions $Q(\mathbf{T}[1], \mathbf{T}[i])$ where $\mathbf{T}[1]$ and $\mathbf{T}[i]$ disagree.

Using a search variant of \Cref{algo:gnr}, Hermelin and Rozenberg \cite[Theorem 4]{HR15} claimed that \textsc{MinRMC} is fixed-parameter tractable with respect to $d + k$.
Here, we reveal that their algorithm is in fact not correct.
The algorithm chooses an arbitrary row vector $\mathbf{S}[i]$ and calls the algorithm by Gramm et al.~\cite{gramm2003fixed} with input matrix $\mathbf{S}' = \mathbf{S}[:, [\ell] \setminus  P_*(\mathbf{S}[i])]$.
This results in a set of row vectors $v$ satisfying $\delta(v, \mathbf{S}') \le d$.
Then, the algorithm constructs an instance of \textsc{ConRMC} where the input matrix is $\mathbf{S}[P_*(\mathbf{S}[i])]$ and the distance bound is given by $d_{i'} = d - \delta(v, \mathbf{S}'[i'])$ for each $i' \in [n]$.
The correctness proof was based on the erroneous assumption that the algorithm of Gramm et al.~\cite{gramm2003fixed} finds \emph{all} row vectors $v$ satisfying $\delta(v, \mathbf{S}') \le d$ in time $O((d + 1)^d \cdot n \ell)$.
Although Gramm et al.~\cite{gramm2003fixed} noted that this is indeed the case when $d$ is optimal, it is not always true.
In fact, it is generally impossible to enumerate all solutions in time $O((d + 1)^d \cdot n \ell)$ because there can be $\Omega(\ell^d)$ solutions.
We use the following simple matrix to illustrate the error in the algorithm of Hermelin and Rozenberg \cite{HR15}:
\begin{equation*}
  \mathbf{S} = \left[\begin{array}{ccc}
    0 & 1 & 1 \\ 1 & 1 & 1 \\ * & 0 & 0 
  \end{array}\right].
\end{equation*}
We show that the algorithm may output an incorrect answer for $d = 2$.
If the algorithm chooses $i = 3$, then the algorithm by Gramm et al.~\cite{gramm2003fixed} returns only one row vector $00$.
Then the algorithm of Hermelin and Rozenberg \cite{HR15} constructs an instance of \textsc{ConRMC} with $\mathbf{S}' = \begin{bmatrix} 0 & 1 \end{bmatrix}^T$ and $d_1 = d_2 = 0$, resulting in \textbf{No}.
However, the row vector $v = 001$ satisfies $\delta(v, \mathbf{S}) = 2$ and thus the correct output is \textbf{Yes}.
To remedy this, we give a fixed-parameter algorithm for \textsc{MinRMC}, adapting the algorithm by Gramm et al.~\cite{gramm2003fixed}.

Before presenting our algorithm, let us give an observation noted by Gramm et al.~\cite{gramm2003fixed} for the case of no missing entries.
Suppose that the input matrix $\mathbf{S} \in (\Sigma \cup \{ * \})^{n \times \ell}$ contains more than \( nd \) \emph{dirty} columns (a column is said to be dirty if it contains at least two distinct symbols from the alphabet).
Clearly, we can assume that every column is dirty.
For any vector $v \in \Sigma^\ell$, there exists $i \in [n]$ with $\delta(v, \mathbf{S}[i]) \ge d$ by the pigeon hole principle and hence we can immediately conclude that it is a \textbf{No}-instance.
It is easy to see that this argument also holds for \textsc{MinRMC} and thus \textsc{ConRMC}.

\begin{lemma}
  \label{lemma:latmostnd}
  Let $(\mathbf{S}, d_1, \dots, d_n)$ be a \textsc{ConRMC} instance, where $\mathbf{S} \in (\Sigma \cup \{ * \})^{n \times \ell}$ and $d_1, \dots, d_n \in \NN$.
  If \( \mathbf{S} \) contains more than $nd$ dirty columns for $d = \max_{i \in [n]} d_i$, then there is no row vector $v \in \Sigma^\ell$ with $\delta(v, \mathbf{S}[i]) \le d_i$ for all $i \in [n]$.
\end{lemma}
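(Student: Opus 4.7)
The plan is to argue via a simple pigeonhole/counting argument. Suppose for contradiction that there exists a row vector $v \in \Sigma^\ell$ with $\delta(v, \mathbf{S}[i]) \le d_i \le d$ for all $i \in [n]$. I want to count, across all rows, the total number of (row, column) pairs that contribute to some Hamming distance and show this sum is forced to exceed $nd$.

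The key observation is that a \emph{dirty} column $j$ contains two distinct symbols $\sigma_1, \sigma_2 \in \Sigma$ (ignoring $*$). Since $v[j]$ is a single symbol of $\Sigma$, at least one of $\sigma_1, \sigma_2$ differs from $v[j]$, and so there exists at least one row index $i_j \in [n]$ such that $\mathbf{S}[i_j, j] \in \Sigma$ and $\mathbf{S}[i_j, j] \ne v[j]$; that is, $j \in Q(v, \mathbf{S}[i_j])$. Thus each dirty column contributes at least one unit to $\sum_{i \in [n]} \delta(v, \mathbf{S}[i]) = \sum_{i \in [n]} |Q(v, \mathbf{S}[i])|$.

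If $\mathbf{S}$ has more than $nd$ dirty columns, then $\sum_{i \in [n]} \delta(v, \mathbf{S}[i]) > nd$. By an averaging argument (pigeonhole), some row $i^* \in [n]$ satisfies $\delta(v, \mathbf{S}[i^*]) > d \ge d_{i^*}$, contradicting our assumption on $v$. Hence no such vector exists and the instance is a \textbf{No}-instance.

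The proof contains no real obstacle; the only subtle point is to justify that every dirty column truly contributes at least one disagreement regardless of the value of $v[j]$, which follows immediately from the fact that a dirty column, by definition, contains at least two \emph{distinct} non-wildcard symbols, so $v[j]$ cannot match all of them.
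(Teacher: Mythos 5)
Your proof is correct and follows essentially the same route as the paper: each dirty column forces at least one disagreement with any candidate vector $v$ (since $v[j]$ cannot match two distinct non-wildcard symbols), so more than $nd$ dirty columns force $\sum_{i} \delta(v,\mathbf{S}[i]) > nd$, and the pigeonhole principle yields a row with distance exceeding $d \ge d_i$. The paper states this pigeonhole argument only in passing; you have simply spelled out the same counting in full detail.
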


\begin{algorithm}[t]
  \caption{Algorithm for \textsc{ConRMC} (generalizing \Cref{algo:gnr})}
  \label{algo:gnr2}
  \begin{algorithmic}[1]
  \Input An incomplete matrix \( \mathbf{S} \in (\Sigma \cup \{ * \})^{n \times \ell} \) and \( d_1, \dots, d_n \in \NN \).
  \Output Decide whether there exists a row vector $v \in \Sigma^\ell$ with $d(v, \mathbf{S}[i]) \le d_i$ for all $i \in [n]$.
  \If {$d_1 = 0$} \label{algo:d1zero}
    \State Let $\mathbf{S}' = \mathbf{S}[[2, n], P_*(\mathbf{S}[1])]$ and $d_i' = d_i - \delta(\mathbf{S}[1], \mathbf{S}[i])$ for each $i \in [2, n]$.
    \State \Return the output of \Cref{algo:wrtl} on $(\mathbf{S}', d_2', \dots, d_n')$. \label{algo:d1zeroout}
  \EndIf
  \State Let $R_i = (P_*(\mathbf{S}[1]) \setminus P_*(\mathbf{S}[i])) \cup Q(\mathbf{S}[1], \mathbf{S}[i])$ for each $i \in [2, n]$.
  \State \algorithmicif\ \( |R_i| \le d_i \) for all $i \in [2, n]$ \algorithmicthen\ \Return \textbf{Yes}. \label{algo:obviousyes}
  \State Choose any $i \in [n]$ with $|R_i| > d_i$.
  \State Choose any $R \subseteq R_i$ with $|R| = d_i + 1$. \label{algo:choiceofqprime}
  \ForAll{$j \in R$} \label{algo:jbranch}
    \State Let $\mathbf{S'} = \mathbf{S}[:, [\ell] \setminus \{ j \}]$ and $d_{i'}' = d_i - \delta(\mathbf{S}[i, j], \mathbf{S}[i', j])$ for each $i' \in [n]$.
    \State \algorithmicif\ recursion on $(\mathbf{S}', d_1', \dots, d_n')$ returns \textbf{Yes} \algorithmicthen\ \Return \textbf{Yes}. \label{algo:jrecursion}
  \EndFor
  \State \Return \textbf{No}.
  \end{algorithmic}
\end{algorithm}

Our algorithm is given in \Cref{algo:gnr2}.
It generalizes \Cref{algo:gnr} and finds the solution vector even if the input matrix is incomplete.
In contrast to \textsc{Neighboring String}, the output cannot be immediately determined even if $d_1 = 0$.
We use \Cref{algo:wrtl} to overcome this issue (Line~\ref{algo:d1zeroout}).
\Cref{algo:gnr2} also considers the columns where the first row vector has missing entries (recall that $P_*(\mathbf{S}[1])$ denotes column indices $j$ with $\mathbf{S}[1, j] = *$) in the branching step (Line~\ref{algo:jbranch}), and not only the columns where $\mathbf{S}[1]$ and $\mathbf{S}[i]$ disagree.
Again, we restrict the branching to $d_i + 1$ subcases (Line~\ref{algo:choiceofqprime}).
This reduces the size of the search tree significantly.
We show the correctness of \Cref{algo:gnr2} and analyze its running time in the proof of the following theorem.

\begin{theorem}
  \label{theorem:fpt}
  For $d = \max_{i \in [n]} d_i$, \textsc{ConRMC} can be solved in $O(n \ell + (d + 1)^{d + k + 1} n)$ time.
\end{theorem}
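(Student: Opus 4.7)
The plan is to prove correctness by induction on $d_1 + k_1$ where $k_1 := |P_*(\mathbf{S}[1])|$, and then to bound the running time by analysing a combined search tree that absorbs the calls to \Cref{algo:wrtl} made at the leaves of \Cref{algo:gnr2}.

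\textbf{Correctness.} For the base case $d_1 = 0$, I would note that any feasible $v$ must agree with $\mathbf{S}[1]$ on every non-missing position, so the only remaining freedom is on $P_*(\mathbf{S}[1])$. Restricting to those columns turns the problem into a \textsc{ConRMC} instance with budget $d_i - \delta(\mathbf{S}[1], \mathbf{S}[i])$ on row $i$, which is precisely the sub-instance passed to \Cref{algo:wrtl}. For the early acceptance at Line~\ref{algo:obviousyes}, setting $v[j] := \mathbf{S}[1, j]$ whenever $\mathbf{S}[1, j] \ne *$ and arbitrarily elsewhere gives
\[
\delta(v, \mathbf{S}[i]) \le |Q(\mathbf{S}[1], \mathbf{S}[i])| + |P_*(\mathbf{S}[1]) \setminus P_*(\mathbf{S}[i])| = |R_i| \le d_i,
\]
while $\delta(v, \mathbf{S}[1]) = 0$, so $v$ is feasible. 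For the branching step, I would argue that every $j \in R_i$ has $\mathbf{S}[i, j] \ne *$ by construction; hence if a feasible $v$ satisfied $v[j] \ne \mathbf{S}[i, j]$ for all $j \in R$, then $\delta(v, \mathbf{S}[i]) \ge |R| = d_i + 1$, a contradiction. Consequently at least one child branch -- which commits to $v[j] = \mathbf{S}[i, j]$ and decrements each $d_{i'}$ by $\delta(\mathbf{S}[i, j], \mathbf{S}[i', j])$ -- captures a solution whenever one exists.

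\textbf{Running time.} The main obstacle is to bound the depth of the combined tree. At each branching step of \Cref{algo:gnr2}, the selected column $j$ lies in $R_i = Q(\mathbf{S}[1], \mathbf{S}[i]) \cup (P_*(\mathbf{S}[1]) \setminus P_*(\mathbf{S}[i]))$. If $j \in Q(\mathbf{S}[1], \mathbf{S}[i])$, then $\mathbf{S}[1, j]$ is non-missing and differs from $\mathbf{S}[i, j]$, so the update $d_{i'}' = d_{i'} - \delta(\mathbf{S}[i, j], \mathbf{S}[i', j])$ with $i' = 1$ forces $d_1$ down by one; otherwise $j \in P_*(\mathbf{S}[1])$ and removing column $j$ drops $k_1$ by one. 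Either way $d_1 + k_1$ strictly decreases. Once $d_1$ hits $0$, control passes to \Cref{algo:wrtl} on a submatrix whose column count equals the surviving $k_1$, and each recursive call of that algorithm strips one more column. Consequently every root-to-leaf path of the combined tree has length at most the initial $d_1 + k_1 \le d + k$, and with at most $d + 1$ children per node the combined tree has $O((d + 1)^{d + k})$ nodes.

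\textbf{Per-node cost.} A one-off $O(n \ell)$ preprocessing computes $\delta(\mathbf{S}[1], \mathbf{S}[i])$ and $|R_i|$ for every $i$; both counters are then updated in $O(n)$ time per column removal, since $j \in R_i$ can be decided in $O(1)$ per row. This keeps the per-child work to $O(n)$, hence $O((d + 1)n)$ per node, and multiplying by the node count yields the claimed $O(n \ell + (d + 1)^{d + k + 1} n)$ bound.
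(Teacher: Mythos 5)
Your proposal is correct and follows essentially the same route as the paper's proof: the same induction on $d_1 + |P_*(\mathbf{S}[1])|$, the same three-case correctness argument (handing off to \Cref{algo:wrtl} when $d_1=0$, the early acceptance via $|R_i|\le d_i$, and the pigeonhole contradiction for the branching), and the same depth bound of $d+k$ on the combined search tree with branching factor $d+1$. The only (immaterial) difference is the per-node accounting: you charge $O(n)$ per child via incrementally maintained counters, whereas the paper invokes \Cref{lemma:latmostnd} to bound $\ell \le nd$ and charges $O(nd)$ per node; both yield the stated $O(n\ell + (d+1)^{d+k+1}n)$ bound.
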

\begin{proof}
  First, we prove that \Cref{algo:gnr2} is correct by induction on $d_1 + |P_*(\mathbf{S}[1])|$.
  More specifically, we show that the algorithm returns \textbf{Yes} if and only if a vector $v \in \Sigma^\ell$ satisfying $\delta(\mathbf{S}[i], v) \le d_i$ for all $i \in [n]$ exists.

  Consider the base case $d_1 + |P_*(\mathbf{S}[1])| = 0$. 
  Since $d_1 = 0$, the algorithm terminates in Line~\ref{algo:d1zeroout}.
  When $d_1 = 0$, any solution vector must agree with $\mathbf{S}[1]$ on each entry unless the entry is missing in $\mathbf{S}[1]$.
  Hence, the output in Line~\ref{algo:d1zeroout} is correct by \Cref{theorem:wrtld}.
  Consider the case $d_1 + |P_*(\mathbf{S}[1])| > 0$.
  Let $R_i = (P_*(\mathbf{S}[1]) \setminus P_*(\mathbf{S}[i])) \cup Q(\mathbf{S}[1], \mathbf{S}[i])$ for each $i \in [2, n]$. 
  If $|R_i| \le d_i$ holds for all $i \in [2, n]$, then the vector $\mathbf{S}[1] \oplus \sigma^\ell$ (the vector obtained by filling each missing entry in $\mathbf{S}[1]$ with $\sigma$) is a solution for an arbitrary character $\sigma \in \Sigma$.
  Hence, Line~\ref{algo:obviousyes} is correct.
  Suppose that there exists a solution vector $v \in \Sigma^\ell$ with $\delta(v, \mathbf{S}[i]) \le d_i$ for all $i \in [n]$.
  We show that the branching in Line~\ref{algo:jbranch} is correct.
  Let $R$ be as specified in Line~\ref{algo:choiceofqprime}.
  We claim that there exists a $j \in R$ with $v[j] = \mathbf{S}[i, j]$ for every choice of $R$.
  Otherwise, $v[j] \ne \mathbf{S}[i, j]$ and $\mathbf{S}[i, j] \ne *$ holds for all $j \in R$ and we have $\delta(v, \mathbf{S}[i]) > d_i$ (a contradiction).
  Note that $\mathbf{S}[:, [\ell] \setminus \{ j \}]$ has exactly one less missing entry if $j \in P_*(\mathbf{S}[1])$ and that $d_1' = d_1 - 1$ in case of $j \in Q(\mathbf{S}[1], \mathbf{S}[i])$.
  It follows that $d_1 + |P_*(\mathbf{S}[1])|$ is strictly smaller in the recursive call (Line~\ref{algo:jrecursion}).
  Hence, the induction hypothesis ensures that the algorithm returns \textbf{Yes} when $v[j] = \mathbf{S}[i, j]$ holds.
  On the contrary, it is not hard to see that the algorithm returns \textbf{No} if there is no solution vector.
  Thus, \Cref{algo:gnr2} is correct.

  We examine the time complexity.
  Assume without loss of generality that $k = |P_*(\mathbf{S}[1])|$ and $d = d_1$ hold initially.
  Consider the search tree where each node corresponds to a call on either \Cref{algo:wrtl} or \Cref{algo:gnr2}.
  If $d_1 > 0$, then $d_1 + P_*(\mathbf{S}[1])$ decreases by 1 in each recursion and there are at most $d + 1$ recursive calls.
  Let $u$ be some node in the search tree that invokes \Cref{algo:wrtl} for the first time.
	We have seen in the proof of \Cref{theorem:wrtld} that the subtree rooted at $u$ is a tree of depth at most $|P_*(\mathbf{S}[1])|$, in which each node has at most $d_i - \delta(\mathbf{S}[1], \mathbf{S}[i]) + 1 \le d + 1$ children.
  Note also that $u$ lies at depth $d + k - |P_*(\mathbf{S}[1])|$.
  Thus, the depth of the search tree is at most $d + k$ and the search tree has size $O((d + 1)^{d + k})$.
  We can assume that $\ell \le nd$ by \Cref{lemma:latmostnd} and hence each node requires $O(nd)$ time.
  This results in the overall running time of $O(n \ell + (d + 1)^{d + k + 1} n)$.
\end{proof}

\begin{algorithm}[t]
  \caption{Algorithm for \textsc{Neighboring String} by Ma and Sun \cite{ma2009more}}
  \label{algo:ms}
  \begin{algorithmic}[1]
    \Input A matrix \( \mathbf{T} \in (\Sigma \cup \{ * \})^{n \times \ell} \) and \( d_1, \dots, d_n \in \NN \).
    \Output Decide whether there exists a row vector $v \in \Sigma^\ell$ with $d(v, \mathbf{T}[i]) \le d_i$ for all $i \in [n]$.
    \State \algorithmicif\ $\delta(\mathbf{T}[1], \mathbf{T}[i]) \le d_i$ for all $i \in [n]$ \algorithmicthen\ \Return \textbf{Yes} \label{algo:terminal2}.
    \State Choose any $i \in [n]$ such that $\delta(\mathbf{T}[1], \mathbf{T}[i]) > d_i$.
    \State Let $Q = Q(\mathbf{T}[1], \mathbf{T}[i])$.
    \ForAll{$v \in \Sigma^{|Q|}$ such that $\delta(v, \mathbf{T}[1]) \le d_1$ and $\delta(v, \mathbf{T}[i]) \le d_i$} \label{algo:jbranchall}
      \State Let $\mathbf{T}' = \mathbf{T}[:, [\ell] \setminus Q]$ and $d_1' = \min \{ d_1 - \delta(v, \mathbf{T}[1, Q]), \lceil d_1 / 2 \rceil - 1\}$. \label{algo:d1half}
      \State Let $d_{i'}' = d_{i'} - d(v, \mathbf{T}[i', Q])$ for each $i' \in [2, n]$.
      \State \algorithmicif\ recursion on $(\mathbf{T}, d_1', \dots, d_n')$ returns \textbf{Yes} \algorithmicthen\ \Return \textbf{Yes}.
    \EndFor
    \State \Return \textbf{No}.
  \end{algorithmic}
\end{algorithm}

Now, we provide a more efficient fixed-parameter algorithm when the alphabet size is small, based on \Cref{algo:ms} by Ma and Sun \cite{ma2009more}.
Whereas \Cref{algo:gnr} considers each position of (a subset of) $Q(\mathbf{T}[1], \mathbf{T}[i])$ one by one, \Cref{algo:ms} considers all vectors on $Q(\mathbf{T}[1], \mathbf{T}[i])$ in a single recursion. 
The following lemma justifies why $d_1$ can be halved (Line~\ref{algo:d1half}) in each iteration (the vectors $u$ and $w$ correspond to $\mathbf{T}[1]$ and $\mathbf{T}[i]$, respectively).

\begin{lemma}{\cite[Lemma 3.1]{ma2009more}}
  \label{prop:drelation}
  Let $u, v, w \in \Sigma^\ell$ be row vectors satisfying $\delta(u, w) > \delta(v, w)$.
  Then, it holds that $\delta(u[Q'], v[Q']) < \delta(u, v) / 2$ for $Q' = [\ell] \setminus Q(u, w)$.
\end{lemma}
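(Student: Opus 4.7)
My approach is to partition the coordinates $[\ell]$ according to how $u$, $v$, and $w$ relate at each position. Let $Q = Q(u,w)$, so that $Q' = [\ell] \setminus Q$ consists of the positions where $u$ and $w$ agree. The first observation I would make is that on $Q'$ we have $u[j] = w[j]$, hence $u[j] \ne v[j]$ if and only if $w[j] \ne v[j]$. Setting $a := \delta(u[Q'], v[Q']) = \delta(w[Q'], v[Q'])$, the task reduces to showing $a < \delta(u,v)/2$.

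Next, I would refine the bookkeeping on $Q$. For every $j \in Q$ we have $u[j] \ne w[j]$, so there are three disjoint possibilities for $v[j]$: either $v[j] = u[j]$, or $v[j] = w[j]$, or $v[j] \notin \{u[j], w[j]\}$. Denoting the sizes of these classes by $p$, $q$, and $r$ respectively, we have $|Q| = p + q + r$. Counting mismatches on $Q' \cup Q$ then yields
\begin{equation*}
\delta(u,v) = a + q + r, \qquad \delta(v,w) = a + p + r, \qquad \delta(u,w) = p + q + r.
\end{equation*}

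With these identities in hand, the hypothesis $\delta(u,w) > \delta(v,w)$ becomes $p + q + r > a + p + r$, i.e., $q > a$. The desired conclusion $a < \delta(u,v)/2$ rearranges to $2a < a + q + r$, i.e., $a < q + r$, which follows immediately from $q > a$ and $r \ge 0$.

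There is no substantive obstacle here; the only point requiring care is setting up the right three-way partition of $Q$, since this is what converts the given inequality $\delta(u,w) > \delta(v,w)$ into the clean bound $q > a$ on the number of positions in $Q$ where $v$ agrees with $w$ against $u$, which is exactly what is needed to control $a$ via $\delta(u,v)$.
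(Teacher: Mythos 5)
Your proof is correct and rests on the same two facts as the paper's own argument: that $u$ and $w$ agree on $Q'$, so $\delta(u[Q'],v[Q']) = \delta(v[Q'],w[Q'])$, and that on $Q(u,w)$ the vector $v$ must disagree with at least one of $u$ and $w$ in every position. The only difference is presentational: you run a direct count via the three-way partition of $Q(u,w)$ into $p$, $q$, $r$ and reduce the claim to $q > a$, whereas the paper packages exactly the same inequalities into a proof by contradiction starting from the assumption $\delta(u[Q'],v[Q']) \ge \delta(u,v)/2$.
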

\begin{proof}
  Assume that $\delta(u[Q'], v[Q']) \ge \delta(u, v) / 2$.
  We can rewrite the value of $\delta(u, v) + \delta(v, w)$ as follows:
  \begin{align*}
    \delta(u, v) + \delta(v, w)
    &= \delta(u[Q'], v[Q']) + \delta(v[Q'], w[Q']) + \delta(u[Q], v[Q]) + \delta(v[Q], w[Q]),
  \end{align*}
  where $Q$ is a shorthand for $Q = Q(v, w)$.
  It follows from the definition of $Q'$ that $u[Q'] = w[Q']$ and hence
  \begin{align}
    \label{eq:breakww}
    \delta(u[Q'], v[Q']) = \delta(v[Q'], w[Q']).
  \end{align}
  We also note that $\delta(u[Q] + v[Q]) + \delta(v[Q] + w[Q]) \ge |Q| = \delta(v, w)$ because it must hold that $u[j] \ne v[j]$ or $v[j] \ne w[j]$ for each $j \in Q$.
  Now, we obtain the following contradiction concluding the proof:
  \begin{align*}
    \delta(u, v) + \delta(v, w)
    \ge 2 \delta(u[Q'], v[Q']) + \delta(u, w)
    > \delta(u, v) + \delta(v, w).
  \end{align*}
\end{proof}

\Cref{prop:drelation} plays a crucial role in obtaining the running time $O(n\ell + (16|\Sigma|)^d nd)$ of Ma and Sun \cite{ma2009more}.
However, \Cref{prop:drelation} may not hold in the presence of missing entries (in fact, \Cref{eq:breakww} may break when at least one of $u$ or $w$ contains missing entries).
For instance, $u = 0^\ell, v = 1^\ell, w = *^\ell$ is one counterexample.
Note that here $Q(u, w) = \emptyset$ and that $\delta(u[Q'], v[Q']) = \delta(u, v) = \ell$.
To work around this issue, let us introduce a new variant of \textsc{Closest String}
which will be useful to derive a fixed-parameter algorithm for \textsc{ConRMC} (\Cref{theorem:fpt2}).
We will use a special character ``$\diamond$'' to denote a ``dummy'' character.

\dprob{Neighboring String with Dummies (NSD)}
{A matrix $\mathbf{T} \in (\Sigma \cup \{ \diamond \})^{n \times \ell}$ and $d_1, \dots, d_n \in \NN$.}
{Is there a row vector $v \in \Sigma^\ell$ such that $\delta(v, \mathbf{T}[i]) \le d_i$ for each $i \in [n]$?}

Note that the definition of \textsc{NSD} forbids dummy characters in the solution vector~$v$.
Observe that \Cref{prop:drelation} (in particular \Cref{eq:breakww}) holds even if row vectors $u$ and $w$ contain dummy characters.
We show that \textsc{NSD} can be solved using \Cref{algo:ms} as a subroutine.

\begin{lemma} \label{lem:NSD}
\textsc{NSD} can be solved in $O(n \ell + |\Sigma|^{k} \cdot nk + 2^{4d - 3k} \cdot |\Sigma|^{d} \cdot nd))$ time, where $d := \max_{i \in [n]} d_i$ and $k$ is the minimum number of dummy characters in any row vector of $\mathbf{T}$.
\end{lemma}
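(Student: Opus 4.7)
The plan is to reduce \textsc{NSD} to \textsc{Neighboring String} by brute-forcing the dummy positions of a single carefully chosen row and then applying \Cref{algo:ms} as a black box. Concretely, I first identify in $O(n\ell)$ time a row $i^*$ minimising the number of dummies (WLOG $i^* = 1$, so that $|P_\diamond(\mathbf{T}[1])| = k$) and then enumerate all $|\Sigma|^k$ candidate values $\sigma \in \Sigma^k$ for $v$ on the columns $P_\diamond(\mathbf{T}[1])$. Each such $\sigma$ yields a reduced instance $(\mathbf{T}^\sigma, d_1^\sigma, \ldots, d_n^\sigma)$ where $\mathbf{T}^\sigma := \mathbf{T}[:, [\ell] \setminus P_\diamond(\mathbf{T}[1])]$ and $d_i^\sigma := d_i - |\{j \in P_\diamond(\mathbf{T}[1]) : \sigma[j] \ne \mathbf{T}[i,j]\}|$; in particular $d_1^\sigma = d_1 - k$ because every dummy in $\mathbf{T}[1]$ disagrees with $\sigma \in \Sigma^k$, and $\mathbf{T}^\sigma[1]$ is dummy-free. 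I then invoke \Cref{algo:ms} on each $(\mathbf{T}^\sigma, d_1^\sigma, \ldots, d_n^\sigma)$ and output \textbf{Yes} iff at least one call does.

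For correctness, observe that $v \in \Sigma^\ell$ solves the original \textsc{NSD} instance iff the restriction of $v$ to $[\ell] \setminus P_\diamond(\mathbf{T}[1])$ solves the reduced instance corresponding to $\sigma := v[P_\diamond(\mathbf{T}[1])]$. The only delicate point is that the rows $\mathbf{T}^\sigma[i]$ with $i \ge 2$ may still carry dummies, so one must verify that \Cref{algo:ms} remains sound in this setting. The early termination that returns \textbf{Yes} when $\delta(\mathbf{T}^\sigma[1], \mathbf{T}^\sigma[i]) \le d_i^\sigma$ for all~$i$ is legitimate because $\mathbf{T}^\sigma[1]$ lies in $\Sigma^{\ell - k}$ and hence is an admissible candidate solution, while $\delta$ already charges every mismatch between a dummy in $\mathbf{T}^\sigma[i]$ and the corresponding $\Sigma$-symbol of $\mathbf{T}^\sigma[1]$. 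The halving step hinges on \Cref{prop:drelation}, whose proof uses only the identity $\delta(u[Q'], v[Q']) = \delta(v[Q'], w[Q'])$ on $Q' = [\ell] \setminus Q(u,w)$; because \textsc{NSD}-instances contain no~$*$, the definition of $Q$ forces $u[j] = w[j]$ for every $j \in Q'$, so the identity survives regardless of whether $w[j]$ is a dummy. This is precisely the payoff for introducing the auxiliary symbol $\diamond$ in place of $*$.

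For the running time, identifying row $1$ and constructing the $|\Sigma|^k$ reduced instances together cost $O(n\ell + |\Sigma|^k \cdot nk)$ time. On each reduced instance, \Cref{algo:ms} runs in $O(nd \cdot (16(|\Sigma|-1))^{d_1^\sigma})$ time; since $d_1^\sigma \le d - k$, summing over all $|\Sigma|^k$ choices of $\sigma$ and using $|\Sigma|^k \cdot (|\Sigma|-1)^{d-k} \le |\Sigma|^d$ yields an $O(nd \cdot 2^{4(d-k)} \cdot |\Sigma|^d) \subseteq O(2^{4d - 3k} \cdot |\Sigma|^d \cdot nd)$ bound for this phase, matching the claim. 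The main obstacle will be verifying that the Ma--Sun halving analysis survives when dummies have been resolved only in the pivot row; once one recognises that $\delta$ treats $\diamond$ exactly like a regular symbol (so the mere absence of $*$-entries already preserves $u[Q'] = w[Q']$), this reduces to a one-line inspection of the proof of \Cref{prop:drelation}, and all other steps of \Cref{algo:ms} go through verbatim.
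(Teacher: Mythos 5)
Your reduction and correctness argument coincide with the paper's proof: brute-force the $k$ dummy positions of a minimum-dummy row over $\Sigma^k$, pass each reduced instance (whose first row is now dummy-free) to \Cref{algo:ms}, and observe that \Cref{prop:drelation} survives because, in the absence of $*$-entries, $Q'=[\ell]\setminus Q(u,w)$ is exactly the set of positions where $u$ and $w$ agree, so $\delta(u[Q'],v[Q'])=\delta(w[Q'],v[Q'])$ even when $w$ carries dummies. That part is fine and matches the paper.

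The gap is in the running-time analysis. You charge each call to \Cref{algo:ms} a cost of $O(nd\cdot(16(|\Sigma|-1))^{d_1^\sigma})$ with $d_1^\sigma\le d-k$, but the $(16|\Sigma|)^{d}$ form of Ma and Sun's bound is only valid when all distance bounds are equal. For \textsc{Neighboring String} with heterogeneous bounds the correct statement is $O\bigl(n\ell+\binom{d_{\max}+d_{\min}}{d_{\min}}\cdot(4|\Sigma|)^{d_{\min}}\cdot nd_{\max}\bigr)$, and in your reduced instances $d_{\min}\le d-k$ while $d_{\max}$ can still be as large as $d$. The binomial coefficient $\binom{2d-k}{d-k}$ is \emph{not} $O(4^{d-k})$ --- for $k=d-1$ it equals $d+1$ while $4^{d-k}=4$ --- so your implicit replacement of it by $4^{d_{\min}}$ fails, and your total bound $2^{4d-4k}\cdot|\Sigma|^d\cdot nd$ is strictly stronger than what the lemma claims and does not follow from the cited algorithm. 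The missing step is precisely the one the paper supplies: bounding $\binom{2d-k}{d}\in O(2^{2d-k})$ via convexity of $x\mapsto x\log x$ (equivalently Stirling), which combined with $(4|\Sigma|)^{d-k}$ and the outer factor $|\Sigma|^k$ yields the exponent $4d-3k$ (not $4d-4k$) in the statement. With that substitution your argument goes through.
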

\begin{proof}
  With \Cref{prop:drelation}, assuming $d=d_1$ one can prove by induction on~$d_1$ that \Cref{algo:ms} solves the \textsc{NSD} problem if the first row vector $\mathbf{T}[1]$ contains no dummy characters by induction on $d_1$.
  Refer to \cite[Theorem~3.2]{ma2009more} for details.
  We describe how we use \Cref{algo:ms} of Ma and Sun \cite{ma2009more} to solve \textsc{NSD}.
  Let $I = (\mathbf{T}, d_1, \dots, d_n)$ be an instance of \textsc{NSD}.
  We assume that $|P_\diamond(\mathbf{T}[1])| = k$.
  For each row vector $u$ of $\Sigma^{k}$, we invoke \Cref{algo:ms} with the input matrix $\mathbf{T}' = \mathbf{T}[[\ell] \setminus P_\diamond(\mathbf{T}[1])]$ and the distance bounds $d_1 - k, d_2 - \delta(u, \mathbf{T}[2, P_\diamond(\mathbf{T}[1])]), \dots, d_n - \delta(u, \mathbf{T}[n, P_\diamond(\mathbf{T}[1])])$.
  Note that $\mathbf{T}'[1]$ contains no dummy character and thus the output of \Cref{algo:ms} is correct.
  We return \textbf{Yes} if and only if \Cref{algo:ms} returns \textbf{Yes} at least once.
  Let us prove that this solves \textsc{NSD}.
  If $I$ is a \textbf{Yes}-instance with solution vector $v \in \Sigma^\ell$, then it is easy to verify that \Cref{algo:ms} returns \textbf{Yes} when \( u = v[P_\diamond(\mathbf{T}[1])] \).
  On the contrary, the distance bounds in the above procedure ensure that $I$ is a \textbf{Yes}-instance if \cref{algo:ms} returns \textbf{Yes}.

  Now we show that this procedure runs in the claimed time.
  Ma and Sun \cite{ma2009more} proved that \Cref{algo:ms} runs in 
  \begin{align*}
    O \left(n \ell + \binom{d_{\max} + d_{\min}}{d_{\min}} \cdot (4|\Sigma|)^{d_{\min}} \cdot n d_{\max} \right)
  \end{align*}
  time, where $d_{\max} = \max_{i \in [n]} d_i$ and $d_{\min} = \min_{i \in [n]} d_i$.
  In fact, they showed that each node in the search tree requires $O(n d_{\max})$ time by remembering previous distances, as it only concerns $O(d_{\max})$ columns.
  In the same spirit, one can compute distances from the first row vector for each \textsc{NSD}-instance under consideration in $O(nk)$ time, given the corresponding distances in the input matrix.
  Since we have \( d_{\max} \le d \) and \( d_{\min} \le d - k \) for each call of \Cref{algo:ms}, it remains to show that \( \binom{2d - k}{d} \in O(2^{2d - k}) \).
  Using Stirling's approximation \( \sqrt{2 \pi} n^{n + 1/2}e^{-n} \le n! \le e n^{n + 1/2}e^{-n} \) which holds for all positive integers~$n$, we obtain
  \begin{align*}
    \binom{2d - k}{d} = \frac{(2d - k)!}{d! \cdot (d - k)!} \le c \cdot \frac{(2d - k)^{2d - k}}{d^d \cdot (d - k)^{d - k}}
  \end{align*}
  for some constant $c$.
  We claim that the last term is upper-bounded by $c \cdot 2^{2d - k}$.
  We use the fact that the function $x \mapsto x \log x$ is convex over its domain $x > 0$ (note that the second derivative is given by \( x \mapsto 1 / x \)).
  Since a convex function $f \colon D \to \RR$ satisfies $(f(x) + f(y)) / 2 \ge f((x + y) / 2)$ for any $x, y \in D$, we obtain
  \begin{align*}
    d \log d + (d - k) \log (d - k) \ge 2 \left( \frac{2d - k}{2} \right) \cdot \log \left( \frac{2d - k}{2} \right).
  \end{align*}
  It follows that \( d^d \cdot (d - k)^{d - k} = 2^{d \log d + (d - k) \log (d - k)} \ge 2^{(2d - k) \log(d - k / 2)} = (d - k / 2)^{2d - k }\).
  This shows that $\binom{2d - k}{d} \in O(2^{2d - k})$.
\end{proof}

Finally, to show our second main result in this section, we provide a polynomial-time reduction from \textsc{ConRMC} to \textsc{NSD}.

\begin{theorem}
  \label{theorem:fpt2}
\textsc{ConRMC} can be solved in $O(n \ell + 2^{4d + k} \cdot |\Sigma|^{d + k} \cdot n (d + k))$ time.
\end{theorem}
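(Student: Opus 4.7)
My plan is to give a polynomial-time reduction from \textsc{ConRMC} to \textsc{NSD} and then invoke \Cref{lem:NSD}. The straightforward idea---replacing each missing entry $*$ with $\diamond$ and compensating by increasing $d_i$ by $|P_*(\mathbf{S}[i])|$---does give a valid reduction, but it yields the wrong running time: the minimum number of dummies per row, which drives the $2^{4d-3k}$ factor in \Cref{lem:NSD}, can be as small as~$0$ whenever some row has no missing entry. To hit the target $2^{4d+k}$ bound I need an \textsc{NSD} instance with $d^{NSD} = d+k$ \emph{and} $k^{NSD} = k$, so every row of the constructed matrix must carry exactly $k$ dummy characters.

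The construction I propose is as follows. Let $k_i := |P_*(\mathbf{S}[i])|$, fix an arbitrary $\sigma \in \Sigma$, and build the \textsc{NSD} matrix $\mathbf{T}$ in two stages. First, replace every $*$ in $\mathbf{S}$ by $\diamond$. Then, for each $i \in [n]$, append $k - k_i$ new ``private'' columns in which row $i$ has $\diamond$ and every other row has $\sigma$. Finally, set $d_i^{NSD} := d_i + k$ for each~$i$. By construction $|P_\diamond(\mathbf{T}[i])| = k_i + (k - k_i) = k$ for every $i$, so $k^{NSD} = k$, and $\max_i d_i^{NSD} = d + k$.

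Correctness is verified in both directions. For the forward direction, given a \textsc{ConRMC}-solution $v \in \Sigma^\ell$, extend it to $v'$ by setting $v'[c] = \sigma$ on every appended column; a short mismatch count gives $\delta(v', \mathbf{T}[i]) = k_i + \delta(v, \mathbf{S}[i]) + (k - k_i) = k + \delta(v, \mathbf{S}[i]) \le d_i + k = d_i^{NSD}$, since the only contributions beyond the ConRMC mismatches are the $k$ forced $\diamond$-mismatches of row $i$ itself (the private columns of other rows contribute $0$ because we placed $\sigma$ there and $v'[c] = \sigma$). For the backward direction, restrict an \textsc{NSD}-solution $v'$ to the first $\ell$ coordinates, call the result $v \in \Sigma^\ell$; the same accounting shows $\delta(v', \mathbf{T}[i]) \ge k + \delta(v, \mathbf{S}[i])$ (the extra summand being a nonnegative contribution from other rows' private columns), so $\delta(v, \mathbf{S}[i]) \le d_i^{NSD} - k = d_i$, as required.

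Plugging $d^{NSD} = d + k$ and $k^{NSD} = k$ into \Cref{lem:NSD} yields $O(n\ell + |\Sigma|^{k} nk + 2^{4(d+k) - 3k} |\Sigma|^{d+k} n(d+k))$, which simplifies to the claimed $O(n\ell + 2^{4d+k} |\Sigma|^{d+k} n(d+k))$ bound (the middle term is dominated by the third, and the $O(nk)$ overhead of constructing the appended columns is dominated by the $O(n\ell)$ term after applying \Cref{lemma:latmostnd} to prune non-dirty columns). The main obstacle is conceptual rather than calculational: one has to recognise that the bound in \Cref{lem:NSD} is controlled by the \emph{minimum} dummy count per row, and then design a reduction that simultaneously raises this minimum to exactly $k$ and keeps $d^{NSD}$ at only $d + k$; once the ``private dummy column filled with~$\sigma$ elsewhere'' trick is found, the remaining verification is essentially bookkeeping.
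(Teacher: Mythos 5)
Your reduction is, at its core, the same as the paper's: pad every row so that it carries exactly $k$ dummy characters, raise every distance bound by $k$, and invoke \Cref{lem:NSD} with maximum distance $d+k$ and minimum dummy count $k$; the correctness accounting ($\delta(v',\mathbf{T}[i]) = \delta(v,\mathbf{S}[i]) + k$ in both directions) is exactly the paper's. The one genuine flaw is quantitative: your gadget appends a \emph{private} block of $k - k_i$ columns for each row $i$, so the constructed matrix has up to $\ell + (n-1)k$ columns and hence $\Theta(n^2 k)$ entries. Merely writing this matrix down, and the $O(n\ell')$ preprocessing term inside \Cref{lem:NSD}, then cost $\Theta(n^2 k)$, which is not within $O(n\ell + 2^{4d+k}\cdot|\Sigma|^{d+k}\cdot n(d+k))$ when $n$ is large compared to the exponential factor; your claim that the overhead is ``$O(nk)$'' counts columns rather than entries, and \Cref{lemma:latmostnd} does not rescue this, since having at most $n(d+k)$ dirty columns is perfectly consistent with a \textbf{Yes}-instance. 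The paper avoids the blow-up by \emph{sharing} the padding: it appends only $k$ columns in total, putting $\sigma$ in the first $k_i$ of them and $\diamond$ in the remaining $k-k_i$ for row $i$; every row still ends up with exactly $k$ dummies, the identical distance accounting goes through, and the matrix stays at $\ell+k$ columns, so the overhead is $O(n(\ell+k)) = O(n\ell)$. With that one change to the gadget, your argument coincides with the paper's proof.
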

\begin{proof}
  Let $I = (\mathbf{S}, d_1, \dots, d_n)$ be an instance of \textsc{ConRMC}.
  We construct an instance $I' = (\mathbf{T}, d_1 + k, \dots, d_n + k)$ of \textsc{NSD} where $\mathbf{T} \in (\Sigma \cup \{ \diamond \})^{n \times (\ell + k)}$ and each row vector of \( \mathbf{T} \) contains exactly $k$ dummy characters.
  Note that such a construction yields an algorithm running in $O(n (\ell + k) + |\Sigma|^{k} \cdot nk + 2^{4d + k} \cdot |\Sigma|^{d + k} \cdot n (d + k)) = O(n \ell + 2^{4d + k} \cdot |\Sigma|^{d + k} \cdot n (d + k))$ time using \Cref{lem:NSD}.
  Let $\sigma \in \Sigma$ be an arbitrary character.
  We define the row vector $\mathbf{T}[i]$ for each $i\in [n]$ as follows:
  Let $\mathbf{T}[i, [\ell]] = \mathbf{S}[i] \oplus \diamond^\ell$ (in other words, the row vector $\mathbf{T}[i, [\ell]]$ is obtained from $\mathbf{S}[i]$ by replacing $*$ by $\diamond$) for the leading $\ell$ entries.
  For the remainder, let
  \begin{align*}
    \mathbf{T}[i, \ell + j] =
    \begin{cases}
      \sigma & \text{if \(j \le |P_*(\mathbf{S}[i])|\)}, \\
      \diamond & \text{otherwise},
    \end{cases}
  \end{align*}
  for each $j \in [k]$.
  See \Cref{fig:dummyreduction} for an illustration.
  We claim that $I$ is a \textbf{Yes} instance if and only if $I'$ is a \textbf{Yes} instance.

  $(\Rightarrow)$
  Let $v \in \Sigma^\ell$ be a solution of $I$.
  We claim that the vector $v' \in \Sigma^{\ell + k}$ with $v'[[\ell]] = v$ and $v[[\ell + 1, \ell + k]] = \sigma^k$ is a solution of $I'$.
  For each $i \in [n]$, we have
  \[
    \delta(v', \mathbf{T}[i])
    = \delta(v'[[\ell]], \mathbf{T}[i, [\ell]]) + \delta(\sigma^k, \mathbf{T}[i, [\ell + 1, \ell + k]]).
  \]
  It is easy to see that the first term is at most \(d_i + |P_*(\mathbf{S}[i])|\) and that the second term equals $k - |P_*(\mathbf{S}[i])|$.
  Thus we have $\delta(v', \mathbf{T}) \le d_i + k$.

  $(\Leftarrow)$
  Let $v' \in \Sigma^\ell$ be a solution of $I'$.
  Since the row vector $\mathbf{T}[i, [\ell + 1, \ell + k]]$ contains $k - |P_*(\mathbf{S}[i])|$ dummy characters, we have \( \delta(v'[[\ell]], \mathbf{T}[i, [\ell]]) \le (d_i + k) - (k - |P_*(\mathbf{S}[i])|) = d_i + |P_*(\mathbf{S}[i])| \) for each $i \in [n]$.
  It follows that $\delta(v'[[\ell]], \mathbf{S}[i]) \le d_i$ holds for each $i \in [n]$.
\end{proof}

Note that the algorithm of \Cref{theorem:fpt2} is faster than the algorithm of~\Cref{theorem:fpt} for $|\Sigma| < d / 16$ and faster than the $O^*(|\Sigma|^k \cdot d^d)$-time algorithm by Hermelin and Rozenberg~\cite{HR15} for $|\Sigma| < d / 2^{4 + d/k}$.

\begin{figure}[t]
  \centering
  \begin{subfigure}[t]{.3\textwidth}
    \begin{align*}
      \mathbf{S} = \left[\begin{array}{ccc}
        0 & 0 & 0 \\
        1 & 1 & * \\
        * & * & 2 
      \end{array}\right]
    \end{align*}
  \end{subfigure}
  \begin{subfigure}[t]{.4\textwidth}
    \begin{align*}
      \mathbf{T} = \left[\begin{array}{*{5}{c}}
        0 & 0 & 0 & \diamond & \diamond \\
        1 & 1 & \diamond & \sigma   & \diamond \\
        \diamond & \diamond & 2 & \sigma   & \sigma 
      \end{array}\right]
    \end{align*}
  \end{subfigure}
  \caption{An illustration of the reduction in \Cref{theorem:fpt2}.
  Given the matrix $\mathbf{S}$ with \( k = 2 \) (left), our reduction constructs the matrix \( \mathbf{T} \) with \( k \) additional columns (right).
  Note that every row vector in \( \mathbf{T} \) contains exactly two dummy characters.
  The \textsc{MinRMC} instance $(\mathbf{S}, d = 1)$ is a \textbf{Yes}-instance with a solution vector $v = 100$.
  The corresponding \textsc{NSD}-instance $(\mathbf{T}, d + k = 3)$ is also a \textbf{Yes}-instance with a solution vector $v' = 100\sigma\sigma$.}
  \label{fig:dummyreduction}
\end{figure}

\section{Conclusion}
\label{sec:minrmc:conc}
We studied problems appearing both in stringology and in the context of matrix 
completion. The goal in both settings 
is to find a 
consensus string (matrix row) that is close to all given input strings (rows). 
The special feature here now is the existence of wildcard letters 
(missing entries) appearing in the 
strings (rows). Thus, these problems naturally generalize the 
well-studied \textsc{Closest String} and related string problems.
Although with applications in the context of data mining, machine learning, and
computational biology
at least as well motivated as \textsc{Closest String},
so far there is comparatively little work on these ``wildcard problems''. 
This work is also meant to initiate further research in this direction.

We conclude with a list of challenges for future research:
\begin{itemize}
  \item
    Can the running time of \Cref{theorem:fpt2} be improved?
    Since Ma and Sun \cite{ma2009more} proved that \textsc{Closest String} can be solved in $O((16|\Sigma|)^d \cdot n \ell)$ time, a plethora of efforts have been made to reduce the base in the exponential dependence in the running time \cite{CW11,NS12,CMW12,CMW16}.
		A natural question is whether these results can be translated to \textsc{MinRMC} and \textsc{ConRMC} as well.
  \item
    Another direction would be to consider the generalization of \textsc{MinRMC} with \emph{outliers}.
    The task is to determine whether there is a set $I \subseteq [n]$ of row indices and a vector $v \in \Sigma^\ell$ such that $|I| \le t$ and $\delta(v, \mathbf{S}[[n] \setminus I]) \le d$.
    For complete input matrices, this problem is known as \textsc{Closest String with Outliers} and
    fixed-parameter tractability with respect to $d + t$ is known~\cite{BM11}. 
    Hence, it is interesting to study whether the outlier variant of \textsc{MinRMC} (or \textsc{ConRMC}) is fixed-parameter tractable with respect to $d + k + t$.
  \item
    Finally, let us mention a maximization variant \textsc{MaxRMC} where the goal is to have a radius at least~$d$.
    The complete case is referred to as \textsc{Farthest String} \cite{WZ09} and fixed-tractability with respect to $|\Sigma| + d$ is known \cite{GGN06,WZ09}.
    Is \textsc{MaxRMC} also fixed-parameter tractable with respect to $(d,|\Sigma|)$?
   
\end{itemize}

\appendix


\bibliography{ref}

\begin{thebibliography}{10}

\bibitem{AFRS14}
Amihood Amir, Jessica Ficler, Liam Roditty, and Oren~Sar Shalom.
\newblock On the efficiency of the {Hamming} {C}-centerstring problems.
\newblock In {\em 25th Symposium on Combinatorial Pattern Matching (CPM '14)},
  pages 1--10. Springer, 2014.

\bibitem{BM11}
Christina Boucher and Bin Ma.
\newblock Closest string with outliers.
\newblock {\em {BMC} Bioinformatics}, 12({S-1}):S55, 2011.

\bibitem{BHKN14}
Laurent Bulteau, Falk H{\"{u}}ffner, Christian Komusiewicz, and Rolf
  Niedermeier.
\newblock Multivariate algorithmics for {NP}-hard string problems.
\newblock {\em Bulletin of the {EATCS}}, 114, 2014.

\bibitem{CMW12}
Zhi{-}Zhong Chen, Bin Ma, and Lusheng Wang.
\newblock A three-string approach to the closest string problem.
\newblock {\em Journal of Computer and System Sciences}, 78(1):164--178, 2012.

\bibitem{CMW16}
Zhi{-}Zhong Chen, Bin Ma, and Lusheng Wang.
\newblock Randomized fixed-parameter algorithms for the closest string problem.
\newblock {\em Algorithmica}, 74(1):466--484, 2016.

\bibitem{CW11}
Zhi{-}Zhong Chen and Lusheng Wang.
\newblock Fast exact algorithms for the closest string and substring problems
  with application to the planted ($\ell$, $d$)-motif model.
\newblock {\em {IEEE/ACM} Transactions Computational Biology and
  Bioinformatics}, 8(5):1400--1410, 2011.

\bibitem{EGKOS19}
Eduard Eiben, Robert Ganian, Iyad Kanj, Sebastian Ordyniak, and Stefan Szeider.
\newblock On clustering incomplete data.
\newblock {\em CoRR}, abs/1911.01465, 2019.
\newblock URL: \url{http://arxiv.org/abs/1911.01465}.

\bibitem{FL97}
Moti Frances and Ami Litman.
\newblock On covering problems of codes.
\newblock {\em Theory of Computing Systems}, 30(2):113--119, 1997.

\bibitem{GKOS18}
Robert Ganian, Iyad~A. Kanj, Sebastian Ordyniak, and Stefan Szeider.
\newblock Parameterized algorithms for the matrix completion problem.
\newblock In {\em 35th International Conference on Machine Learning, (ICML
  '18)}, pages 1642--1651, 2018.

\bibitem{GGN06}
Jens Gramm, Jiong Guo, and Rolf Niedermeier.
\newblock Parameterized intractability of distinguishing substring selection.
\newblock {\em Theory of Computing Systems}, 39(4):545--560, 2006.

\bibitem{gramm2003fixed}
Jens Gramm, Rolf Niedermeier, and Peter Rossmanith.
\newblock Fixed-parameter algorithms for closest string and related problems.
\newblock {\em Algorithmica}, 37(1):25--42, 2003.

\bibitem{HR15}
Danny Hermelin and Liat Rozenberg.
\newblock Parameterized complexity analysis for the closest string with
  wildcards problem.
\newblock {\em Theoretical Computer Science}, 600:11--18, 2015.
\newblock Preliminary version appeared at {\it CPM '14}.

\bibitem{KKM17}
Du{\v{s}}an Knop, Martin Kouteck{\'{y}}, and Matthias Mnich.
\newblock Combinatorial $n$-fold integer programming and applications.
\newblock In {\em 25th Annual European Symposium on Algorithms, ({ESA} '17)},
  pages 54:1--54:14, 2017.

\bibitem{LMS18}
Daniel Lokshtanov, D{\'{a}}niel Marx, and Saket Saurabh.
\newblock Slightly superexponential parameterized problems.
\newblock {\em {SIAM} Journal on Computing}, 47(3):675--702, 2018.

\bibitem{ma2009more}
Bin Ma and Xiaoming Sun.
\newblock More efficient algorithms for closest string and substring problems.
\newblock {\em SIAM Journal on Computing}, 39(4):1432--1443, 2009.

\bibitem{NS12}
Naomi Nishimura and Narges Simjour.
\newblock Enumerating neighbour and closest strings.
\newblock In {\em 7th International Symposium on Parameterized and Exact
  Computation, ({IPEC} '12)}, pages 252--263. Springer, 2012.

\bibitem{sinz2005towards}
Carsten Sinz.
\newblock Towards an optimal {CNF} encoding of boolean cardinality constraints.
\newblock In {\em 11th International Conference on Principles and Practice of
  Constraint Programming (CP '05)}, pages 827--831. Springer, 2005.

\bibitem{WZ09}
Lusheng Wang and Binhai Zhu.
\newblock Efficient algorithms for the closest string and distinguishing string
  selection problems.
\newblock In {\em 3rd International Frontiers in Algorithmics Workshop (FAW
  '09)}, pages 261--270. Springer, 2009.

\end{thebibliography}

\end{document}